\documentclass[12pt,a4paper]{elsarticle}
\usepackage{amsmath}
\usepackage{amsfonts}
\usepackage{amssymb}
\usepackage{setspace}
\usepackage{amsthm}
\usepackage{rotating}
\usepackage[a4paper, left=2cm, right=2cm, top=2.8cm, bottom=2.8cm]{geometry}
\usepackage{float}
\usepackage{amsmath}
\usepackage{graphicx}
\usepackage{subfigure,bm}
\usepackage{cite}
\usepackage{natbib}
\usepackage{multirow}
\usepackage{marginnote} 
\newcommand\mn[1]{
}

\usepackage{comment}
\usepackage{color}

\usepackage{xcolor}
\usepackage{ulem}

\newcommand{\bol}[1]{\mbox{\boldmath$#1$}}
\newcommand{\bmu}{\bol{\mu}}

\newcommand{\bpsi}{\bol{\psi}}
\newcommand{\blambda}{\bol{\lambda}}

\newcommand{\bSigma}{\boldsymbol{\Sigma}}
\newcommand{\bOmega}{\boldsymbol{\Omega}}

\newcommand{\bGamma}{\boldsymbol{\Gamma}}
\newcommand{\bDelta}{\boldsymbol{\Delta}}

\newcommand{\T}{\mathrm{T}}
\newcommand{\V}{\mathrm{V}}
\newcommand{\E}{\mathrm{E}}

\newcommand{\bA}{\mathbf{A}}

\newcommand{\bc}{\mathbf{c}}

\newcommand{\bK}{\mathbf{K}}

\newcommand{\bR}{\mathbf{R}}
\newcommand{\bX}{\mathbf{X}}
\newcommand{\bx}{\mathbf{x}}
\newcommand{\bI}{\mathbf{I}}

\newcommand{\bz}{\mathbf{z}}

\newcommand{\by}{\mathbf{y}}

\newcommand{\bv}{\mathbf{v}}

\newtheorem{theorem}{Theorem}

\newtheorem{corollary}{Corollary}

\newtheorem{remark}{Remark}
\begin{document}

\begin{frontmatter}

\title{The kurtosis of normal variance-mean mixtures}

\author[lu]{Farrukh Javed\corref{cor1}}
\ead{Farrukh.Javed@stat.lu.se}

\address[lu]{Department of Statistics, Lund University, Sweden}

\cortext[cor1]{Corresponding author.}

\begin{abstract}
This paper studies kurtosis in multivariate normal variance-mean mixtures
through its fourth-cumulant representation. We obtain an explicit expression
for the fourth cumulant whose structure separates naturally into a rank-one
directional component, a mixed direction--covariance component, and a
covariance-pairing component induced by the mixing variable. This formulation
shows that kurtosis in this class is not merely a directional tail phenomenon,
but also reflects the interaction between mean variation, covariance structure,
and stochastic mixing. We further derive the standardized fourth cumulant,
relate it to Mardia's multivariate excess kurtosis, and study directional
excess kurtosis through projection pursuit. Statistical applications are
developed for cumulant-based diagnostics of multivariate non-Gaussianity,
dominant-tail-direction analysis, and influential-tail-event detection. The
practical relevance of the theoretical results is illustrated with simulated
data and daily stock returns.
\end{abstract}

\begin{keyword}
Normal variance-mean mixtures \sep fourth cumulants \sep multivariate kurtosis \sep Mardia's excess kurtosis \sep directional kurtosis \sep projection pursuit \sep heavy-tailed distributions
\end{keyword}

\end{frontmatter}
\begin{abstract}
This paper studies kurtosis in multivariate normal variance-mean mixtures through its fourth-cumulant representation. We obtain an explicit expression for the fourth cumulant whose structure separates naturally into a rank-one directional component, a mixed direction--covariance component, and a covariance-pairing component induced by the mixing variable. This formulation shows that kurtosis in this class is not merely a directional tail phenomenon, but also reflects the interaction between mean variation, covariance structure, and stochastic mixing. We further derive the standardized fourth cumulant, relate it to Mardia's multivariate excess kurtosis, and study directional excess kurtosis through projection pursuit. Statistical applications are developed for cumulant-based diagnostics of
multivariate non-Gaussianity, dominant-tail-direction analysis, and
influential-tail-event detection. The practical relevance of the theoretical
results is illustrated with simulated data and daily stock returns.
\end{abstract}
\begin{keyword}
Normal variance-mean mixtures \sep fourth cumulants \sep multivariate kurtosis \sep Mardia's excess kurtosis \sep directional kurtosis \sep projection pursuit \sep heavy-tailed distributions
\end{keyword}
\section{Introduction}
\label{sec:introduction}

Normal variance-mean mixtures, sometimes also referred to as mean--variance normal mixtures, form a flexible class of multivariate non-Gaussian models. By mixing both the mean and the variance of a Gaussian random vector through a positive scalar variable, they can capture skewness, heavy tails, multimodality and tail dependence while retaining a conditional Gaussian structure
\citep{BarndorffNielsenKentSorensen1982,KozubowskiPodgorskiRychlik2013,Loperfido2024}.
A \(d\)-dimensional random vector \(\bX\) has a normal variance-mean mixture representation if
\[
\bX=\bpsi+(W-\kappa_1)\blambda+\sqrt W\,\by,
\qquad
\by\sim\mathcal N_d(\mathbf0,\bOmega),
\qquad W\perp \by,
\]
where \(W\) is a positive mixing variable, \(\kappa_1=\E(W)\),
\(\blambda\in\mathbb R^d\), and \(\bOmega\) is a positive definite matrix.
The centering by \(W-\kappa_1\) is a notational convention ensuring that
\(\bpsi\) is the mean of \(\bX\). This representation separates location,
directional mean variation and Gaussian covariance variation, and therefore
provides a convenient framework for studying how non-Gaussian shape
features arise from the mixing mechanism.

The class contains several important distributions as special cases. When
\(W\) is degenerate, the model reduces to the Gaussian distribution, while
\(\blambda=\mathbf0\) gives a normal scale mixture. Gamma mixing leads to
distributions closely related to the generalized asymmetric Laplace family,
and inverse Gaussian mixing leads to normal-inverse-Gaussian-type models
\citep{BarndorffNielsen1997, KotzKozubowskiPodgorski2001,Protassov2004,KozubowskiPodgorskiRychlik2013}.
When \(W\) is discrete with two support points, the model becomes a mixture
of two normal distributions with proportional covariance matrices
\citep{TitteringtonSmithMakov1985,Loperfido2024}. These examples explain
why normal variance-mean mixtures are widely used in multivariate
analysis, finance, actuarial modelling, clustering and time series
applications \citep{BinghamKiesel2002,WraithForbes2015,KarlssonMazurNguyen2023,Loperfido2024}.

Kurtosis is one of the most common ways of describing tail heaviness and
fourth-order departure from normality. In the multivariate setting,
however, kurtosis is not a single object. Scalar measures, such as
Mardia's multivariate kurtosis, summarize the overall magnitude of
fourth-order non-Gaussianity, whereas tensor and matrix representations of
the fourth cumulant retain information about the directions in which this
non-Gaussianity occurs. The fourth cumulant is therefore the natural
mathematical object behind multivariate kurtosis as it removes the
covariance-pairing terms already present under normality and isolates
genuinely fourth-order structure \citep{KolloVonRosen2005,Loperfido2011,Loperfido2014}.

Let \(\bx\) be a \(d\)-dimensional random vector with mean vector
\(\bmu\), covariance matrix \(\bSigma=\{\sigma_{ij}\}\), and finite
fourth-order moments. The fourth cumulant tensor
\(\mathcal K_4(\bx)=\{\kappa_{ijhk}\}\) has entries
\[
\kappa_{ijhk}
=
\E\{(x_i-\mu_i)(x_j-\mu_j)(x_h-\mu_h)(x_k-\mu_k)\}
-\sigma_{ij}\sigma_{hk}
-\sigma_{ih}\sigma_{jk}
-\sigma_{ik}\sigma_{jh}.
\]
Equivalently, the entries may be arranged into the \(d^2\times d^2\)
fourth cumulant matrix
\[
\bK_{4,\bx}
=
\E\{\by\otimes\by^\T\otimes\by\otimes\by^\T\}
-
(\bI_{d^2}+\bK_{d,d})(\bSigma\otimes\bSigma)
-
\operatorname{vec}(\bSigma)\operatorname{vec}(\bSigma)^\T,
\]
where \(\by=\bx-\bmu\), \(\bK_{d,d}\) is the commutation matrix, and
\(\operatorname{vec}(\cdot)\) denotes vectorization
\citep{MN1979,KolloVonRosen2005,Loperfido2011}. In what follows,
\(\bK_{4,\bx}\) denotes the matricized form of the fourth cumulant tensor.

Since fourth cumulants depend on scale, it is often more informative to
work with standardized variables. For
\[
\bz=\bSigma^{-1/2}(\bx-\bmu),
\]
where \(\bSigma^{-1/2}\) is the symmetric positive definite square root of
\(\bSigma^{-1}\), the standardized fourth cumulant is
\[
\bK_{4,\bz}
=
(\bSigma^{-1/2}\otimes\bSigma^{-1/2})
\bK_{4,\bx}
(\bSigma^{-1/2}\otimes\bSigma^{-1/2}).
\]
This matrix removes location and covariance-scale effects, and it is the
main fourth-order object studied in this paper. It provides a
matrix-valued description of multivariate kurtosis and is closely related
to projection pursuit, invariant coordinate selection, independent
component analysis and outlier detection
\citep{Mardia1970,Mardia1974,Kollo2008,TylerCritchleyDumbgenOja2009,ArchimbaudNordhausenRuizGazen2018}.

Mardia's multivariate kurtosis is obtained as a scalar contraction of the standardized fourth cumulant matrix \citep{Mardia1970,Mardia1974,KolloSrivastava2004,JLM2024,JLM2025}. As such, it provides a global measure of fourth-order departure from normality, but it does not identify the directions along which this departure is most pronounced. This limitation is relevant because two multivariate distributions may have similar values of scalar kurtosis while displaying markedly different directional tail structures. A natural complement is therefore to examine the same fourth-order information through one-dimensional projections, where kurtosis can be evaluated as a function of the projection direction. This perspective connects directly with projection pursuit, which searches for linear projections that exhibit pronounced departures from normality, commonly through large skewness or kurtosis \citep{MalkovichAfifi1973,PenaPrieto2001a,PenaPrieto2001b}. In the fourth-order setting, such directions provide a more localized view of the cumulant structure and may reveal clusters, outlying observations, tail events, or non-Gaussian components that remain hidden under covariance-based analysis alone \citep{PenaPrieto2001a,PenaPrieto2001b,Loperfido2018,Loperfido2020,Domino2020}.

The purpose of this paper is to study kurtosis in normal variance-mean mixtures through the structure of the fourth cumulant. We obtain an explicit expression for the fourth cumulant of $\bX$, whose terms separate naturally into three interpretable components. The first is a rank-one directional component governed by the fourth cumulant of the mixing variable and the mean-shift direction $\blambda$. The second is a mixed direction--covariance component involving the third cumulant of $W$, the direction $\blambda$, and the covariance matrix $\bOmega$. The third is a covariance-pairing component involving the variance of $W$ and $\bOmega$. This structure explains why kurtosis in normal variance-mean mixtures is generally richer than a purely rank-one directional tail effect, even when a dominant direction is present.

The results extend and complement earlier work on cumulants of multivariate non-Gaussian models. Fourth cumulants of multivariate random sums have been used to study Edgeworth expansions and the limitations of normal approximations \citep{JLM2024}. In the Poisson--skew-normal aggregate model, closed-form fourth cumulants and Mardia's kurtosis have been used for method-of-moments estimation and have also revealed limitations of certain kurtosis-based dimension-reduction procedures \citep{JLM2025}. The present paper moves from random-sum models to mean--variance normal mixtures and focuses on the directional structure underlying multivariate kurtosis. In doing so, it connects the cumulants of the scalar mixing variable with the mean-shift direction, the Gaussian covariance structure, and the resulting fourth-cumulant behaviour.

Several consequences follow from this fourth-cumulant formulation. First, we obtain the standardized fourth cumulant matrix of $\bX$, which removes covariance-scale effects and provides a natural diagnostic for non-Gaussianity. Second, we express Mardia's multivariate excess kurtosis as a scalar contraction of the same standardized cumulant object. Third, we study directional excess kurtosis and identify, under suitable conditions, the population direction along which it is maximized. In the isotropic case this direction coincides with $\blambda$, whereas in the anisotropic case it is modified by the covariance structure. These results clarify when kurtosis is essentially concentrated along a dominant direction and when mixed direction--covariance or covariance-pairing effects must also be taken into account.

The theoretical results are complemented by simulations and an empirical illustration. The simulation study verifies the explicit fourth-cumulant formula, validates the expression for Mardia's excess kurtosis, examines the finite-sample recovery of the directional kurtosis maximizer, and evaluates diagnostics based on the Frobenius norm of the standardized fourth cumulant matrix and on a rank-one residual measuring the degree of directional concentration. The empirical illustration, based on daily log returns of ten large U.S. stocks, shows that the standardized fourth cumulant matrix detects strong non-Gaussianity, identifies a concentrated tail component, and locates observations that contribute disproportionately to the empirical tail signal.

The rest of the paper is organized as follows. Section~\ref{sec:k4-mvnm} obtains the fourth cumulant of mean--variance normal mixtures and shows how its explicit form separates naturally into directional, mixed direction--covariance and covariance-pairing terms. The section also develops the standardized fourth cumulant and relates it to Mardia's multivariate excess kurtosis. Section~\ref{sec:statistical-applications} develops the main statistical implications of the cumulant formulation, including multivariate nonnormality diagnostics, directional excess kurtosis, projection-pursuit analysis and dominant-tail-direction identification. Section~\ref{sec:numerical-illustrations} reports the simulation study and empirical illustration. Section~\ref{sec:conclusion} concludes, and the proofs are given in the Appendix.

\section{Fourth cumulants of  normal variance-mean mixtures}
\label{sec:k4-mvnm}

\subsection{Model and cumulant notation}
\label{subsec:model-cumulant-notation}

We begin by introducing the normal variance-mean mixture representation used
throughout the paper. The parametrization separates the mean of the random
vector from the stochastic mixing effects, which makes the subsequent
fourth-cumulant calculations more transparent. Let
\[
\bX=\bpsi+(W-\kappa_1)\blambda+\sqrt W\,\by
\]
be a \(d\)-dimensional normal variance-mean mixture, where \(W\) is a
nonnegative scalar mixing variable with \(\kappa_1=\E(W)\),
\(\by\sim\mathcal N_d(\mathbf0,\bOmega)\), and \(W\) is independent of
\(\by\). We assume that \(W\) has finite fourth moment. Then
\(\E(\bX)=\bpsi\), and the centered vector is
\[
\bX_c:=\bX-\E(\bX)
=
(W-\kappa_1)\blambda+\sqrt W\,\by .
\]

The covariance matrix of \(\bX\) is
\[
\bSigma_X
=
\operatorname{Cov}(\bX)
=
\kappa_2\blambda\blambda^\top+\kappa_1\bOmega,
\]
where \(\kappa_2=\operatorname{Var}(W)\). We use
\(\mathcal K_4(\bX)\) for the fourth cumulant tensor and
\(\bK_{4,\bX}\) for its \(d^2\times d^2\) matrix representation, following
the notation introduced in Section~\ref{sec:introduction}. The tensor form
is used when displaying the fourth-cumulant formula, since it keeps the
three covariance pairings explicit, while the matrix form is used for
standardization, spectral summaries, and numerical diagnostics.

For later use, we write the fourth cumulant tensor in the model-specific
form
\[
\mathcal K_4(\bX)
=
\E[\bX_c^{\otimes4}]
-
\Big(
\bSigma_X\otimes_{ij,kl}\bSigma_X
+
\bSigma_X\otimes_{ik,jl}\bSigma_X
+
\bSigma_X\otimes_{il,jk}\bSigma_X
\Big),
\]
where the three terms correspond to the pairings \((ij)(kl)\),
\((ik)(jl)\), and \((il)(jk)\), respectively. Its matricized form is the
corresponding arrangement obtained by grouping the index pairs \((i,j)\)
and \((k,l)\).

\subsection{Fourth-cumulant structure}
\label{subsec:fourth-cumulant-structure}

We first obtain the fourth cumulant tensor of the mixture considered. The resulting formula makes explicit how the second, third, and
fourth cumulants of the mixing variable shape the fourth-order structure
through covariance-pairing, mixed direction--covariance, and directional
tensor terms.

\begin{theorem}[Fourth-cumulant formula]

\label{thm:k4-mvnm}

Under the normal variance-mean mixture model above, assume that \(W\) has
finite fourth moment. Then
\[
\mathcal K_4(\bX)
=
\kappa_4\blambda^{\otimes4}
+
\kappa_3\Delta(\blambda,\bOmega)
+
\kappa_2\Gamma(\bOmega),
\]
where the symmetric tensors \(\Gamma(\bOmega)\) and
\(\Delta(\blambda,\bOmega)\) are defined by
\[
\Gamma(\bOmega)
=
\bOmega\otimes_{ij,kl}\bOmega
+
\bOmega\otimes_{ik,jl}\bOmega
+
\bOmega\otimes_{il,jk}\bOmega,
\]
and
\[
\begin{aligned}
\Delta(\blambda,\bOmega)
={}&
\blambda\blambda^\top\otimes_{ij,kl}\bOmega
+
\blambda\blambda^\top\otimes_{ik,jl}\bOmega
+
\blambda\blambda^\top\otimes_{il,jk}\bOmega \\
&+
\bOmega\otimes_{ij,kl}\blambda\blambda^\top
+
\bOmega\otimes_{ik,jl}\blambda\blambda^\top
+
\bOmega\otimes_{il,jk}\blambda\blambda^\top .
\end{aligned}
\]
\end{theorem}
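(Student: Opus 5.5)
The plan is to derive the result from the cumulant generating function, or more carefully from the log-characteristic function, of \(\bX_c\). The key tool is conditioning on \(W\). Given \(W\), the centered vector \(\bX_c\) is Gaussian with mean \((W-\kappa_1)\blambda\) and covariance \(W\bOmega\). Hence, for \(\mathbf t\in\mathbb R^d\),
\[
\E\{\exp(i\mathbf t^\top\bX_c)\}
=
e^{-i\kappa_1 \mathbf t^\top\blambda}\,
\E\Big[\exp\Big\{W\Big(i\,\mathbf t^\top\blambda-\tfrac12\mathbf t^\top\bOmega\mathbf t\Big)\Big\}\Big].
\]
Because \(W\ge 0\) and \(\mathbf t^\top\bOmega\mathbf t\ge 0\), the exponent \(W z\) with \(z=i a-b\), \(a=\mathbf t^\top\blambda\), \(b=\tfrac12\mathbf t^\top\bOmega\mathbf t\) has nonpositive real part. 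The expectation \(\phi(z)=\E(e^{zW})\) is therefore well defined on the closed left half-plane.

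Finite fourth moment of \(W\) gives a Taylor expansion of \(\log\phi(z)\) about \(z=0\) up to order four, with an \(o(|z|^4)\) remainder, uniformly as \(z\to0\) in that half-plane. The argument is the same as the one-dimensional argument for characteristic functions: expand \(e^{zW}\) with the integral remainder bounded by \(|zW|^4\), then use dominated convergence. This gives
\[
\log\phi(z)=\sum_{r=1}^{4}\kappa_r\,\frac{z^r}{r!}+o(|z|^4),
\]
where \(\kappa_r\) are the cumulants of \(W\). This justification is the only genuinely delicate point. I expect it to be the main obstacle, because \(W\) need not have a moment generating function, so I will avoid the real cgf and work on the left half-plane.

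Next I substitute \(z=ia-b\), noting that \(a=O(|\mathbf t|)\) and \(b=O(|\mathbf t|^2)\), and collect terms by homogeneous degree in \(\mathbf t\). The linear term \(\kappa_1 i a\) cancels the centering factor, and the degree-two term reproduces \(-\tfrac12\mathbf t^\top\bSigma_X\mathbf t\) as a consistency check. The degree-four part of \(\log\E\exp(i\mathbf t^\top\bX_c)\) is
\[
\kappa_4\frac{(ia)^4}{24}+\kappa_3\frac{3(ia)^2(-b)}{6}+\kappa_2\frac{b^2}{2}
=
\frac{\kappa_4}{24}(\mathbf t^\top\blambda)^4
+\frac{\kappa_3}{4}(\mathbf t^\top\blambda)^2(\mathbf t^\top\bOmega\mathbf t)
+\frac{\kappa_2}{8}(\mathbf t^\top\bOmega\mathbf t)^2 .
\]
By definition the fourth cumulant tensor is the fourth derivative of this quartic form at \(\mathbf t=\mathbf 0\), since \(i^4=1\). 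Equivalently, \(\kappa_{ijkl}\) is \(24\) times the coefficient of the fully symmetrized quartic form.

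The final step is a combinatorial count. Differentiating \((\mathbf t^\top\blambda)^4\) four times gives \(24\,\blambda^{\otimes4}\), which produces the term \(\kappa_4\blambda^{\otimes 4}\).

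For \((\mathbf t^\top\blambda)^2(\mathbf t^\top\bOmega\mathbf t)=\sum\lambda_a\lambda_b\omega_{cd}t_at_bt_ct_d\), the fourth derivative is a sum over the \(24\) assignments of \((i,j,k,l)\) to \((a,b,c,d)\). Each of the six ways of choosing which index pair goes to \(\blambda\blambda^\top\) and which to \(\bOmega\) occurs \(2\cdot2=4\) times. The derivative is therefore \(4\Delta(\blambda,\bOmega)\), and multiplying by \(\kappa_3/4\) gives \(\kappa_3\Delta(\blambda,\bOmega)\).

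Likewise, \((\mathbf t^\top\bOmega\mathbf t)^2\) has fourth derivative \(8\Gamma(\bOmega)\): there are three pairings, each counted \(2\cdot2\cdot2\) times (swap within each \(\bOmega\) pair and swap the two factors). Multiplying by \(\kappa_2/8\) gives \(\kappa_2\Gamma(\bOmega)\).

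Summing the three contributions yields the stated formula. As a cross-check, one can also obtain it by computing \(\E[\bX_c^{\otimes4}]\) directly, using Isserlis' theorem for \(\by\) and the central moments of \(W\), and then subtracting the three \(\bSigma_X\) pairings. Rewriting the central moments of \(W\) in terms of \(\kappa_2,\kappa_3,\kappa_4\) should reproduce the same three terms.
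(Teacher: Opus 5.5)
Your proof is correct, but it takes a different route from the paper. The paper's proof is the direct computation you list as a cross-check. It expands $\bX_c^{\otimes4}$ into sixteen terms and discards those with an odd number of Gaussian factors. Isserlis' theorem then gives $\E[\bX_c^{\otimes4}]=\E(U^4)\blambda^{\otimes4}+\E(U^2W)\Delta(\blambda,\bOmega)+\E(W^2)\Gamma(\bOmega)$ with $U=W-\kappa_1$, and these coefficients are rewritten as $\kappa_4+3\kappa_2^2$, $\kappa_3+\kappa_1\kappa_2$ and $\kappa_2+\kappa_1^2$. Finally the paper notes that the three $\bSigma_X$ pairings equal $3\kappa_2^2\blambda^{\otimes4}+\kappa_1\kappa_2\Delta(\blambda,\bOmega)+\kappa_1^2\Gamma(\bOmega)$, so subtraction leaves exactly the pure cumulant terms.

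You instead condition on $W$. This writes the log-characteristic function of $\bX_c$ as $\log\E(e^{zW})$ at $z=i\mathbf t^\top\blambda-\tfrac12\mathbf t^\top\bOmega\mathbf t$, minus the centering term, and you read off the quartic part. Your coefficients and your counts ($24$, $4\Delta$, $8\Gamma$) are right.

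Your route explains structurally why each of $\kappa_2,\kappa_3,\kappa_4$ enters linearly and why no products such as $\kappa_2^2$ or $\kappa_1\kappa_2$ survive, with no cancellation step. It also yields cumulants of all orders at once; for example, putting $\mathbf t=s\bc$ gives the directional cumulants $\eta_{2,\bc},\eta_{3,\bc},\eta_{4,\bc}$ of Theorem~\ref{thm:directional-kurtosis} immediately.

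The price is the analytic care you flagged: the closed left half-plane, since $W$ may lack a moment generating function; the principal logarithm near $\mathbf t=\mathbf 0$; and an $o(|z|^4)$ remainder under only a fourth moment. You also use two standard facts implicitly. First, the terms of degrees five through eight produced by $z^3$ and $z^4$ are $o(|\mathbf t|^4)$ and drop out by uniqueness of the degree-four Taylor polynomial. Second, the characteristic-function definition of the fourth cumulant agrees with the moment-based definition the paper adopts. The paper's proof needs none of this analysis and works directly from its stated definition, at the cost of heavier bookkeeping.
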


\begin{proof}
{\it See Appendix.}
\end{proof}

The formula in Theorem~\ref{thm:k4-mvnm} is useful not only as an
algebraic expression, but also as a way to interpret how the mixing
distribution enters the fourth-order structure. The following remark
summarizes this interpretation.

\begin{remark}
The structure in Theorem~\ref{thm:k4-mvnm} shows how the cumulants of the
mixing variable enter the fourth cumulant tensor. The coefficient
\(\kappa_4\) multiplies the rank-one directional term
\(\blambda^{\otimes4}\), \(\kappa_3\) multiplies the symmetrized mixed
direction--covariance term \(\Delta(\blambda,\bOmega)\), and
\(\kappa_2\) multiplies the covariance-pairing term \(\Gamma(\bOmega)\).
Thus the fourth cumulant records directional, mixed, and covariance-pairing
effects within a single tensorial object. If \(W\) is degenerate, then
\(\kappa_r=0\) for all \(r\geq2\), the model reduces to a Gaussian random
vector, and \(\mathcal K_4(\bX)=\mathbf0\).
\end{remark}
The same formula can be made explicit for common choices of the mixing
distribution.
\begin{remark}[Specific mixing laws]
\label{rem:specific-mixing-laws}
The structure in Theorem~\ref{thm:k4-mvnm} can be specialized directly once
the cumulants of the mixing variable \(W\) are specified. For a Gaussian
random vector, all cumulants of order higher than two vanish, giving
\(\mathcal K_4(\bX)=\mathbf0\), which serves as the Gaussian baseline.

For the inverse Gaussian mixing law
\(W\sim\operatorname{IG}(\delta,\gamma)\), in the parametrization
\[
\kappa_1=\frac{\delta}{\gamma},\qquad
\kappa_2=\frac{\delta}{\gamma^3},\qquad
\kappa_3=\frac{3\delta}{\gamma^5},\qquad
\kappa_4=\frac{15\delta}{\gamma^7},
\]
the fourth cumulant tensor becomes
\[
\mathcal K_4(\bX)
=
\frac{15\delta}{\gamma^7}\blambda^{\otimes4}
+
\frac{3\delta}{\gamma^5}\Delta(\blambda,\bOmega)
+
\frac{\delta}{\gamma^3}\Gamma(\bOmega).
\]
Thus, for fixed \(\delta\), smaller values of \(\gamma\) amplify the raw
fourth-order cumulant contributions.

For Gamma mixing, \(W\sim\operatorname{Gamma}(\alpha,\beta)\), with shape
\(\alpha\) and scale \(\beta\),
\[
\kappa_r=(r-1)!\,\alpha\beta^r,\qquad r\geq1.
\]
Hence
\[
\mathcal K_4(\bX)
=
6\alpha\beta^4\blambda^{\otimes4}
+
2\alpha\beta^3\Delta(\blambda,\bOmega)
+
\alpha\beta^2\Gamma(\bOmega).
\]
This case includes the usual Gamma-mixture constructions underlying
variance-gamma and generalized asymmetric Laplace models, subject to the
chosen parametrization. The preceding expressions describe raw fourth cumulants. Standardized quantities, such as Mardia's excess kurtosis, also involve \(\bSigma_X^{-1}\), and \(\bSigma_X\) depends on the parameters of \(W\). Thus, changes in the mixing parameters affect both the fourth cumulant and the covariance normalization, so their effect on standardized kurtosis need not be proportional to their effect on the raw cumulant tensor.
\end{remark}
\subsection{Mardia's kurtosis}
\label{subsec:mardia-kurtosis}

The fourth cumulant tensor gives a detailed description of fourth-order
structure, but it is often useful to summarize this information by a scalar
measure. Mardia's multivariate kurtosis provides such a classical summary.
For a \(d\)-dimensional random vector \(\bX\) with mean \(\bmu_X\),
nonsingular covariance matrix \(\bSigma_X\), and finite fourth moments,
\[
\beta_{2,d}^M(\bX)
=
\E\left[
\left\{(\bX-\bmu_X)^\top\bSigma_X^{-1}(\bX-\bmu_X)\right\}^2
\right],
\qquad
\gamma_{2,d}^M(\bX)
=
\beta_{2,d}^M(\bX)-d(d+2).
\]
For a multivariate normal distribution,
\(\gamma_{2,d}^M(\bX)=0\) \citep{Mardia1970,Mardia1974}. In the present model, Mardia's excess kurtosis can be expressed as a
covariance-standardized contraction of the fourth cumulant. It therefore
summarizes the same fourth-order structure studied above, but in a scalar
form that does not retain the separate directional, mixed
direction--covariance, and covariance-pairing contributions. The following
result gives the resulting closed-form expression in terms of the cumulants
of the mixing variable \(W\).

\begin{theorem}[Mardia's kurtosis under normal variance-mean mixtures]
\label{thm:mardia-mvnm}
Under the model in Theorem~\ref{thm:k4-mvnm}, assume that \(\bSigma_X\) is nonsingular. Then
\[
\beta_{2,d}^M(\bX)
=
d(d+2)+\gamma_{2,d}^M(\bX),
\]
where
\[
\begin{aligned}
\gamma_{2,d}^M(\bX)
={}&
\kappa_4
\bigl(\blambda^\top\bSigma_X^{-1}\blambda\bigr)^2  \\
&+
\kappa_3
\left[
2\bigl(\blambda^\top\bSigma_X^{-1}\blambda\bigr)
\operatorname{tr}(\bSigma_X^{-1}\bOmega)
+
4\blambda^\top\bSigma_X^{-1}\bOmega\bSigma_X^{-1}\blambda
\right] \\
&+
\kappa_2
\left[
\{\operatorname{tr}(\bSigma_X^{-1}\bOmega)\}^2
+
2\operatorname{tr}(\bSigma_X^{-1}\bOmega\bSigma_X^{-1}\bOmega)
\right].
\end{aligned}
\]
Equivalently,
\[
\gamma_{2,d}^M(\bX)
=
\operatorname{vec}(\bSigma_X^{-1})^\top
\bK_{4,\bX}
\operatorname{vec}(\bSigma_X^{-1}),
\]
where \(\bK_{4,\bX}\) is the \(d^2\times d^2\) matrix representation of
\(\mathcal K_4(\bX)\) associated with
\(\operatorname{vec}(\bX_c\bX_c^\top)\), and
\(\bX_c=\bX-\bmu_X\).
\end{theorem}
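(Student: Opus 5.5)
The plan is to first prove the general identity $\gamma_{2,d}^M(\bX)=\operatorname{vec}(\bSigma_X^{-1})^\top\bK_{4,\bX}\operatorname{vec}(\bSigma_X^{-1})$, valid for any random vector with finite fourth moments and nonsingular covariance. Then I will substitute the decomposition of Theorem~\ref{thm:k4-mvnm} and contract each of its three pieces separately.

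For the general identity, write $\bA=\bSigma_X^{-1}$. In tensor form,
\[
\bigl(\bX_c^\top\bA\bX_c\bigr)^2=\sum_{i,j,k,l}a_{ij}a_{kl}\,X_{c,i}X_{c,j}X_{c,k}X_{c,l}.
\]
Taking expectations therefore contracts the fourth moment tensor with $a_{ij}a_{kl}$. Replacing the moment by cumulant plus the three pairings, the pairings contribute
\[
\sum a_{ij}a_{kl}\bigl(\sigma_{ij}\sigma_{kl}+\sigma_{ik}\sigma_{jl}+\sigma_{il}\sigma_{jk}\bigr)=\{\operatorname{tr}(\bA\bSigma_X)\}^2+2\operatorname{tr}(\bA\bSigma_X\bA\bSigma_X)=d^2+2d.
\]
Here the symmetry of $\bA$ and $\bSigma_X$ is used to identify the last two pairings. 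This gives $\beta_{2,d}^M=d(d+2)+\sum a_{ij}a_{kl}\kappa_{ijkl}$. The remaining sum is exactly the stated vec quadratic form, because grouping $(i,j)$ and $(k,l)$ makes the $((i,j),(k,l))$ entry of $\bK_{4,\bX}$ equal to $\kappa_{ijkl}$.

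The key lemma is the following. For symmetric matrices $\mathbf B,\mathbf C$, the contraction of $\mathbf B\otimes_{ij,kl}\mathbf C$ with $a_{ij}a_{kl}$ equals $\operatorname{tr}(\bA\mathbf B)\operatorname{tr}(\bA\mathbf C)$. The contractions of $\mathbf B\otimes_{ik,jl}\mathbf C$ and of $\mathbf B\otimes_{il,jk}\mathbf C$ each equal $\operatorname{tr}(\bA\mathbf B\bA\mathbf C)$. I will then apply this lemma to each term of Theorem~\ref{thm:k4-mvnm}:
\begin{itemize}
\item For $\blambda^{\otimes4}=\blambda\blambda^\top\otimes_{ij,kl}\blambda\blambda^\top$, the contraction is $(\blambda^\top\bA\blambda)^2$.
\item For $\Gamma(\bOmega)$, it is $\{\operatorname{tr}(\bA\bOmega)\}^2+2\operatorname{tr}(\bA\bOmega\bA\bOmega)$.
\item For $\Delta(\blambda,\bOmega)$, the six terms give $2(\blambda^\top\bA\blambda)\operatorname{tr}(\bA\bOmega)$ from the two $(ij,kl)$ pairings and $4\operatorname{tr}(\bA\blambda\blambda^\top\bA\bOmega)=4\blambda^\top\bA\bOmega\bA\blambda$ from the other four.
\end{itemize}
Multiplying by $\kappa_4,\kappa_3,\kappa_2$ and summing yields the displayed formula.

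The only delicate point is bookkeeping: making sure the index-pairing notation $\otimes_{ij,kl}$ matches the matricization convention. In particular, $\bK_{4,\bX}$ must correspond to $\operatorname{vec}(\bX_c\bX_c^\top)$ so that the quadratic form really reproduces $\sum a_{ij}a_{kl}\kappa_{ijkl}$. Because the cumulant tensor is fully symmetric, any consistent grouping gives the same contraction, so I expect no real obstacle there. As an optional further step, one could simplify using $\bSigma_X=\kappa_2\blambda\blambda^\top+\kappa_1\bOmega$, for example $\operatorname{tr}(\bA\bOmega)=(d-\kappa_2\blambda^\top\bA\blambda)/\kappa_1$. This is not required by the statement.
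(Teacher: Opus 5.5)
Your proposal is correct and follows the same route as the paper. You first obtain $\gamma_{2,d}^M(\bX)=\operatorname{vec}(\bSigma_X^{-1})^\top\bK_{4,\bX}\operatorname{vec}(\bSigma_X^{-1})$ by removing the three covariance pairings, which contract to $d^2+2d$, and then contract the $\blambda^{\otimes4}$, $\Delta(\blambda,\bOmega)$ and $\Gamma(\bOmega)$ terms of Theorem~\ref{thm:k4-mvnm} separately; the only differences are notational (index sums instead of the paper's $\operatorname{vec}$/Kronecker/commutation-matrix form, plus your explicit pairing lemma giving $\operatorname{tr}(\bA\mathbf B)\operatorname{tr}(\bA\mathbf C)$ and $\operatorname{tr}(\bA\mathbf B\bA\mathbf C)$, which the paper uses only implicitly).
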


\begin{proof}
{\it See Appendix.}
\end{proof}

\subsection{Structural and standardized forms}
\label{subsec:structural-standardized}

The general fourth-cumulant formula becomes more transparent under common
covariance structures. These cases clarify how the mean-shift direction
$\blambda$ and the Gaussian covariance matrix $\bOmega$ jointly determine
the shape of the cumulant tensor. They also provide useful reference forms
for interpreting the standardized cumulant and for distinguishing
directional tail effects from covariance-driven fourth-order structure.

\begin{corollary}[Structural simplifications]
\label{cor:structural-simplifications}
Under the assumptions of Theorem~\ref{thm:k4-mvnm}, the following
simplifications hold.

If \(\bOmega=\bI_d\), then
\[
\Gamma(\bI_d)
=
\bI_d\otimes_{ij,kl}\bI_d
+
\bI_d\otimes_{ik,jl}\bI_d
+
\bI_d\otimes_{il,jk}\bI_d,
\]
and \(\Delta(\blambda,\bI_d)\) is obtained from
\(\Delta(\blambda,\bOmega)\) by replacing \(\bOmega\) with \(\bI_d\).
Thus the covariance-pairing component is determined by canonical identity
contractions.

If \(\bOmega=\operatorname{diag}(\omega_1,\ldots,\omega_d)\), the
covariance contractions in \(\Gamma(\bOmega)\) and
\(\Delta(\blambda,\bOmega)\) are sparse in the paired indices, although
the mixed tensor is not diagonal unless additional structure is imposed on
\(\blambda\).

If \(\bOmega=\omega\blambda\blambda^\top\) for some \(\omega>0\), then
\[
\Delta(\blambda,\bOmega)=6\omega\blambda^{\otimes4},
\qquad
\Gamma(\bOmega)=3\omega^2\blambda^{\otimes4},
\]
and hence
\[
\mathcal K_4(\bX)
=
(\kappa_4+6\omega\kappa_3+3\omega^2\kappa_2)\blambda^{\otimes4}.
\]
\end{corollary}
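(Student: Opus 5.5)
The plan is to read off all three parts of Corollary~\ref{cor:structural-simplifications} directly from Theorem~\ref{thm:k4-mvnm}. The expression \(\mathcal K_4(\bX)=\kappa_4\blambda^{\otimes4}+\kappa_3\Delta(\blambda,\bOmega)+\kappa_2\Gamma(\bOmega)\) holds for every positive definite \(\bOmega\). The tensors \(\Gamma\) and \(\Delta\) are built from the pairing products \(\bA\otimes_{ij,kl}\mathbf B\), which are bilinear in \((\bA,\mathbf B)\). So every simplification amounts to substituting the special \(\bOmega\) into these definitions and evaluating entrywise. Throughout I will use the entrywise meaning
\[
(\bA\otimes_{ij,kl}\mathbf B)_{ijkl}=a_{ij}b_{kl},\qquad
(\bA\otimes_{ik,jl}\mathbf B)_{ijkl}=a_{ik}b_{jl},\qquad
(\bA\otimes_{il,jk}\mathbf B)_{ijkl}=a_{il}b_{jk}.
\]
This is consistent with the way the paper wrote \(\mathcal K_4(\bX)\) in Section~\ref{subsec:model-cumulant-notation}.

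\emph{Isotropic case.} For \(\bOmega=\bI_d\), substituting into the definitions gives the displayed \(\Gamma(\bI_d)\) with entries \(\delta_{ij}\delta_{kl}+\delta_{ik}\delta_{jl}+\delta_{il}\delta_{jk}\). The same substitution gives \(\Delta(\blambda,\bI_d)\); nothing more is claimed, so this part is immediate.

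\emph{Diagonal case.} For \(\bOmega=\operatorname{diag}(\omega_1,\ldots,\omega_d)\), I note that \(\omega_{ab}=\omega_a\delta_{ab}\). Hence each entry of \(\Gamma(\bOmega)\) vanishes unless the indices coincide in pairs. Each of the six summands of \(\Delta\) is nonzero only when the pair of indices carried by \(\bOmega\) coincides, giving entries of the form \(\lambda_i\lambda_j\omega_k\delta_{kl}\) and their permutations. These remain nonzero for general \(\blambda\) without any full diagonality. I will also exhibit one off-diagonal entry, for instance \((i,j,k,k)\) with \(i\neq j\), equal to \(\lambda_i\lambda_j\omega_k\). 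This shows that the mixed tensor is not diagonal unless, say, \(\blambda\) has a single nonzero coordinate.

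\emph{Rank-one case.} For \(\bOmega=\omega\blambda\blambda^\top\), every pairing product of two copies of \(\blambda\blambda^\top\) has entries \(\lambda_i\lambda_j\lambda_k\lambda_l\), whatever the pairing. Therefore \(\blambda\blambda^\top\otimes_{\cdot,\cdot}\bOmega=\omega\blambda^{\otimes4}\) for each of the three pairings, and the same holds with the factors in the other order. The six terms of \(\Delta\) then give \(6\omega\blambda^{\otimes4}\). The three terms of \(\Gamma\) give \(3\omega^2\blambda^{\otimes4}\), with \(\omega^2\) coming from bilinearity. Substituting both into Theorem~\ref{thm:k4-mvnm} yields the coefficient \(\kappa_4+6\omega\kappa_3+3\omega^2\kappa_2\).

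The only delicate point is that \(\omega\blambda\blambda^\top\) is singular for \(d>1\), whereas Theorem~\ref{thm:k4-mvnm} was stated with positive definite \(\bOmega\). I would handle this by observing that the proof of Theorem~\ref{thm:k4-mvnm} uses only Gaussianity of \(\by\) with covariance \(\bOmega\), via Isserlis' formula \(\E[\by^{\otimes4}]=\Gamma(\bOmega)\), and this holds for degenerate Gaussians. Alternatively, both sides are polynomial in the entries of \(\bOmega\), so the identity extends by continuity from positive definite matrices to the positive semidefinite boundary.
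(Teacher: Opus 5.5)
Your proposal is correct and follows the route the paper implicitly takes: the corollary is stated without a separate proof, as direct substitution of the special $\bOmega$ into Theorem~\ref{thm:k4-mvnm}. Your observation that $\omega\blambda\blambda^\top$ is singular, so that one must note Isserlis' formula $\E[\by^{\otimes4}]=\Gamma(\bOmega)$ also holds for degenerate Gaussians, fills a small gap the paper leaves unaddressed. One minor slip: $\Delta(\blambda,\bOmega)_{ijkk}=\lambda_i\lambda_j\omega_k$ (with $i\neq j$) requires $k\notin\{i,j\}$; for $k=i$ the entry is $3\lambda_i\lambda_j\omega_i$, which is still a nonzero off-pair entry when $\lambda_i\lambda_j\neq0$, so your conclusion is unaffected.
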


The preceding corollary describes how special covariance structures affect
the shape of the fourth cumulant in the original coordinate system. For
comparison across distributions, samples, or projection directions, it is
also useful to remove second-order scale and dependence effects. This is
achieved by whitening with respect to \(\bSigma_X\), which produces a
standardized fourth cumulant on a covariance-free scale.

\begin{corollary}[Standardized fourth cumulant]
\label{cor:standardized-fourth-cumulant}
Assume that \(\bSigma_X\) is nonsingular and let
\[
\bz=\bSigma_X^{-1/2}(\bX-\bmu_X),
\]
where \(\bSigma_X^{-1/2}\) is the symmetric square root of
\(\bSigma_X^{-1}\). Then \(\E(\bz)=\mathbf 0\) and
\(\operatorname{Cov}(\bz)=\bI_d\), and 
\[
\bK_{4,\bz}
=
(\bSigma_X^{-1/2}\otimes\bSigma_X^{-1/2})
\bK_{4,\bX}
(\bSigma_X^{-1/2}\otimes\bSigma_X^{-1/2})^\top .
\]
\end{corollary}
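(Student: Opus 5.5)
The plan is to prove the corollary in three steps: the first two moments of \(\bz\) directly, then the matrix identity from the multilinearity of fourth cumulants under linear maps. Write \(\bA=\bSigma_X^{-1/2}\). It is symmetric and nonsingular because \(\bSigma_X=\kappa_2\blambda\blambda^\top+\kappa_1\bOmega\) is positive definite by assumption. Then \(\bz=\bA\bX_c\), with \(\bX_c=\bX-\bmu_X\) as in Theorem~\ref{thm:mardia-mvnm}.

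\emph{First and second moments.} Linearity of expectation gives \(\E(\bz)=\bA\,\E(\bX_c)=\mathbf 0\). For the covariance,
\[
\operatorname{Cov}(\bz)=\bA\bSigma_X\bA^\top=\bSigma_X^{-1/2}\bSigma_X\bSigma_X^{-1/2}=\bI_d,
\]
using symmetry of the square root and that it commutes with \(\bSigma_X\). Both have a common spectral decomposition.

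\emph{Transformation of the fourth-moment term.} I will use the matrix definition of \(\bK_{4,\cdot}\) from Section~\ref{sec:introduction}, in the vec form associated with \(\operatorname{vec}(\bX_c\bX_c^\top)\). By the identity \(\operatorname{vec}(\bA\bX_c\bX_c^\top\bA^\top)=(\bA\otimes\bA)\operatorname{vec}(\bX_c\bX_c^\top)\),
\[
\E\{\operatorname{vec}(\bz\bz^\top)\operatorname{vec}(\bz\bz^\top)^\top\}
=(\bA\otimes\bA)\,\E\{\operatorname{vec}(\bX_c\bX_c^\top)\operatorname{vec}(\bX_c\bX_c^\top)^\top\}\,(\bA\otimes\bA)^\top .
\]

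\emph{Transformation of the Gaussian-pairing terms.} Each pairing term must transform the same way. For the \((ij)(kl)\) pairing, \(\operatorname{vec}(\bSigma_z)=\operatorname{vec}(\bA\bSigma_X\bA^\top)=(\bA\otimes\bA)\operatorname{vec}(\bSigma_X)\). For the other two, the mixed-product rule gives \(\bSigma_z\otimes\bSigma_z=(\bA\otimes\bA)(\bSigma_X\otimes\bSigma_X)(\bA\otimes\bA)^\top\). The commutation identity \(\bK_{d,d}(\bA\otimes\bA)=(\bA\otimes\bA)\bK_{d,d}\) then handles the \(\bK_{d,d}(\bSigma\otimes\bSigma)\) term. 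Subtracting these from the moment term yields the stated formula. Since \(\bA\) is symmetric, the transpose on the right factor is harmless and is kept only for generality.

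An equivalent route is entrywise: cumulants are multilinear, so \(\kappa_{abcd}(\bz)=\sum A_{ai}A_{bj}A_{ch}A_{dk}\,\kappa_{ijhk}(\bX)\), and regrouping indices into pairs gives the Kronecker form.

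\emph{Main obstacle.} The only delicate point is bookkeeping. The matricization has to be the one whose rows and columns are indexed by \(\operatorname{vec}(\bX_c\bX_c^\top)\), and the commutation-matrix pairing has to be checked to be invariant under conjugation by \(\bA\otimes\bA\). I would verify this via \(\bK_{d,d}(\bA\otimes\bB)\bK_{d,d}=\bB\otimes\bA\) with \(\bB=\bA\). Finite fourth moments of \(\bX\) follow from the finite fourth moment of \(W\), so all quantities exist.
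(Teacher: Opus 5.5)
Your proof is correct and is essentially the argument the paper relies on. The paper gives no separate proof of this corollary; it simply invokes the linear transformation rule for fourth cumulants under $\bz=\bSigma_X^{-1/2}\bX_c$ (see the proof of Theorem~\ref{thm:mardia-mvnm}), and your $\operatorname{vec}$/Kronecker computation verifies that rule in detail, including the check that $\bK_{d,d}$ commutes with $\bSigma_X^{-1/2}\otimes\bSigma_X^{-1/2}$ for the pairing terms. One small imprecision: positive definiteness of $\bSigma_X$ is not assumed but follows because it is a nonsingular covariance matrix.
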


The standardized cumulant removes location and covariance-scale effects.
Consequently, scalar, spectral and directional summaries derived from
\(\bK_{4,\bz}\) provide scale-free measures of fourth-order departure from
normality. This representation is particularly useful for methods based on
whitened data, including projection pursuit and independent component
analysis, where the aim is to identify non-Gaussian structure beyond
second-order dependence. It also contains Mardia's excess kurtosis as the
identity contraction
\[
\gamma_{2,d}^{M}(\bX)
=
\operatorname{vec}(\bI_d)^\top
\bK_{4,\bz}
\operatorname{vec}(\bI_d),
\]
so Mardia's measure may be viewed as one scalar summary of the full
standardized fourth cumulant.

\section{Statistical applications}
\label{sec:statistical-applications}

We next consider how the fourth-cumulant formulation can be used for
statistical analysis. Rather than working only with the full tensor, we
study scalar and matrix summaries that are suitable for projection pursuit,
sample estimation, and diagnostic interpretation. The resulting tools
connect the population cumulant structure with directional excess kurtosis,
standardized cumulant matrices, spectral diagnostics, and fourth-order ICA
methods.
\subsection{Directional kurtosis and projection pursuit}
\label{subsec:directional-kurtosis}

Projection pursuit seeks low-dimensional projections that reveal departures
from Gaussianity. A natural fourth-order index is the excess kurtosis of a
one-dimensional projection, since Gaussian projections have zero excess
kurtosis. Directions with large excess kurtosis may therefore indicate
clusters, outlying observations, tail events, or other forms of
non-Gaussian structure that are not visible from covariance analysis alone
\citep{pena2001a,pena2001b}. In the present model this perspective is
particularly natural, because the mean--variance mixture contains a
mean-shift direction $\blambda$, while the Gaussian component contributes
through the covariance matrix $\bOmega$.

For $\bc\neq\mathbf0$, define
\[
Z_\bc=\bc^\top(\bX-\bmu_X),
\qquad
a_\bc=\bc^\top\blambda,
\qquad
b_\bc=\bc^\top\bOmega\bc .
\]
The following result gives the cumulants of this projection and shows how
directional skewness and kurtosis depend on the interaction between
$a_\bc$ and $b_\bc$.
\begin{theorem}[Directional cumulants]
\label{thm:directional-kurtosis}
Assume that \(W\) has finite fourth moment, and let
\(\kappa_r=\operatorname{cum}_r(W)\). For each projection direction
\(\bc\), define
\[
\eta_{r,\bc}:=\operatorname{cum}_r(Z_\bc),
\qquad r=2,3,4,
\]
with \(\eta_{2,\bc}=\operatorname{Var}(Z_\bc)\). Then
\[
\begin{aligned}
\eta_{2,\bc}
&=
\kappa_2a_\bc^2+\kappa_1b_\bc,\\
\eta_{3,\bc}
&=
\kappa_3a_\bc^3+3\kappa_2a_\bc b_\bc,\\
\eta_{4,\bc}
&=
\kappa_4a_\bc^4+6\kappa_3a_\bc^2b_\bc+3\kappa_2b_\bc^2.
\end{aligned}
\]
Consequently, the directional excess kurtosis is
\[
\gamma_2(Z_\bc)
=
\frac{\eta_{4,\bc}}{\eta_{2,\bc}^2}.
\]
\end{theorem}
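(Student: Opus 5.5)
The plan is to reduce everything to a scalar normal variance-mean mixture and then use the law of total cumulance, conditioning on \(W\). Projecting the centered representation gives
\[
Z_\bc=\bc^\top\bX_c=(W-\kappa_1)a_\bc+\sqrt W\,\bc^\top\by ,
\qquad
\bc^\top\by\sim\mathcal N(0,b_\bc),
\]
with \(\bc^\top\by\) independent of \(W\). Hence, conditionally on \(W\), \(Z_\bc\) is Gaussian with mean \((W-\kappa_1)a_\bc\) and variance \(Wb_\bc\). All its conditional cumulants of order \(\geq3\) therefore vanish.

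The cleanest route is the cumulant generating function. By conditioning,
\[
\log\E e^{tZ_\bc}
=\log\E\exp\{t(W-\kappa_1)a_\bc+\tfrac12t^2Wb_\bc\}
=-t\kappa_1a_\bc+K_W\bigl(ta_\bc+\tfrac12t^2b_\bc\bigr),
\]
where \(K_W(s)=\sum_{r\ge1}\kappa_r s^r/r!\) is the cumulant generating function of \(W\). Only finite fourth moments are assumed, so the mgf may not exist. I would handle this in one of two ways. The first option is to run the same computation with characteristic functions, putting \(s=ita_\bc-\tfrac12t^2b_\bc\) and using a fourth-order Taylor expansion of \(\log\E e^{isW}\) about \(0\), which is valid under finite fourth moments. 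The second option is to apply the law of total cumulance directly: \(\operatorname{cum}_r(Z)=\sum_\pi\operatorname{cum}(\operatorname{cum}(Z_{B}\mid W):B\in\pi)\), where only blocks of size 1 and 2 survive.

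Next I would expand \(K_W(s)\) with \(s=ta_\bc+\tfrac12t^2b_\bc\) up to order \(t^4\) and read off \(\eta_r=r!\,[t^r]\). The linear term gives \(\eta_1=\kappa_1a_\bc-\kappa_1a_\bc=0\). For the \(t^2\) coefficient, the contributions are \(\kappa_1b_\bc/2\) and \(\kappa_2a_\bc^2/2\), so \(\eta_2=\kappa_2a_\bc^2+\kappa_1b_\bc\). For the \(t^3\) coefficient, the contributions are \(\kappa_3a_\bc^3/6\) and \(\kappa_2\cdot 2a_\bc\tfrac12b_\bc/2\), which gives \(\eta_3=\kappa_3a_\bc^3+3\kappa_2a_\bc b_\bc\). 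For the \(t^4\) coefficient, the contributions are \(\kappa_4a_\bc^4/24\), \(\kappa_3\cdot3a_\bc^2\tfrac12b_\bc/6\) and \(\kappa_2\tfrac14b_\bc^2/2\). Multiplying by \(24\) gives \(\kappa_4a_\bc^4+6\kappa_3a_\bc^2b_\bc+3\kappa_2b_\bc^2\).

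As a cross-check, \(\eta_4\) should equal \(\mathcal K_4(\bX)\) contracted with \(\bc^{\otimes4}\), by multilinearity of cumulants. By Theorem~\ref{thm:k4-mvnm}, \(\blambda^{\otimes4}\) contracts to \(a_\bc^4\). Each of the six terms of \(\Delta\) contracts to \(a_\bc^2b_\bc\), and each of the three terms of \(\Gamma\) contracts to \(b_\bc^2\), which agrees with the expansion above.

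Finally, the excess kurtosis of a scalar variable is \(\eta_4/\eta_2^2\) by definition, since \(\mu_4-3\eta_2^2=\eta_4\). This requires \(\eta_{2,\bc}>0\), which holds for \(\bc\neq\mathbf0\) because \(\bOmega\) is positive definite and \(\kappa_1>0\) whenever \(W\) is positive and not identically zero. The only real obstacle is the rigorous justification of the generating-function expansion under finite fourth moments alone, and the characteristic-function Taylor argument settles it.
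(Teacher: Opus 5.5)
Your proof is correct, but it follows a different route from the paper's. The paper expands $Z_\bc=a_\bc U+\sqrt W\,\bc^\top\by$ with $U=W-\kappa_1$ directly. There $\eta_{2,\bc}$ is the variance, and $\eta_{3,\bc}$ is the third central moment $a_\bc^3\E(U^3)+3a_\bc\E(UW)b_\bc$. The paper then gets $\eta_{4,\bc}$ by contracting the tensor formula of Theorem~\ref{thm:k4-mvnm} with $\bc^{\otimes4}$, which is exactly the step you use only as a cross-check.

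You instead condition on $W$ and read every cumulant off the single identity $K_{Z_\bc}(t)=-t\kappa_1a_\bc+K_W(ta_\bc+\tfrac12t^2b_\bc)$. Equivalently, you use the law of total cumulance with conditional cumulants $(W-\kappa_1)a_\bc$ and $Wb_\bc$. Your coefficient extraction is right, and so are the partition counts $1,3$ at order three and $1,6,3$ at order four.

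Your route is self-contained, does not need Theorem~\ref{thm:k4-mvnm}, and gives $\eta_{r,\bc}$ for every $r$ by the same substitution. The paper's route is shorter once the tensor formula is available and needs only finite moments and multilinearity of cumulants. The analytic obstacle you raise therefore never arises there, and it does not arise in your total-cumulance variant either.

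If you keep the characteristic-function variant, fix one slip: the relevant quantity is $\E e^{sW}$ with $s=ita_\bc-\tfrac12t^2b_\bc$, not $\E e^{isW}$. The expansion you then need is for a complex argument, not the usual one along the imaginary axis. Since $W\ge0$ gives $\operatorname{Re}(sW)\le0$, the bound $|e^z-\sum_{k=0}^4z^k/k!|\le\min\{|z|^5/5!,\,2|z|^4/4!\}$ holds for $\operatorname{Re}z\le0$. Combined with $\E W^4<\infty$ and dominated convergence, it gives the required $o(|s|^4)$ remainder.
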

\begin{proof}
For a fixed direction \(\bc\),
\[
Z_\bc=a_\bc(W-\kappa_1)+\sqrt W\,\bc^\top\by,
\]
where \(\bc^\top\by\sim\mathcal N(0,b_\bc)\) and is independent of \(W\).
Thus
\[
\eta_{2,\bc}
=
\operatorname{Var}(Z_\bc)
=
\kappa_2a_\bc^2+\kappa_1b_\bc.
\]
Writing \(U=W-\kappa_1\), the third directional cumulant is
\[
\begin{aligned}
\eta_{3,\bc}
&=
a_\bc^3\E(U^3)
+
3a_\bc\E(UW)\E\{(\bc^\top\by)^2\}  \\
&=
\kappa_3a_\bc^3+3\kappa_2a_\bc b_\bc .
\end{aligned}
\]
The fourth directional cumulant is obtained by contracting
Theorem~\ref{thm:k4-mvnm} with \(\bc^{\otimes4}\). Since
\[
\bc^{\otimes4}:\blambda^{\otimes4}=a_\bc^4,
\qquad
\bc^{\otimes4}:\Delta(\blambda,\bOmega)=6a_\bc^2b_\bc,
\qquad
\bc^{\otimes4}:\Gamma(\bOmega)=3b_\bc^2,
\]
we obtain
\[
\eta_{4,\bc}
=
\kappa_4a_\bc^4+6\kappa_3a_\bc^2b_\bc+3\kappa_2b_\bc^2.
\]
Therefore,
\[
\gamma_2(Z_\bc)=\frac{\eta_{4,\bc}}{\eta_{2,\bc}^2}.
\]
\end{proof}

The theorem shows that the tail behaviour of a projection is not governed
by its Euclidean alignment with \(\blambda\) alone. The Gaussian variance
contributed in the same direction, \(b_\bc=\bc^\top\bOmega\bc\), also
enters the third and fourth cumulants. Hence the relevant population
geometry is covariance-adjusted rather than purely Euclidean.

When \(\bOmega\) is positive definite, this geometry is summarized by the
generalized Rayleigh quotient
\[
R(\bc)
=
\frac{(\bc^\top\blambda)^2}{\bc^\top\bOmega\bc}.
\]
Indeed, after dividing the numerator and denominator of
\(\gamma_2(Z_\bc)\) by \((\bc^\top\bOmega\bc)^2\), the directional excess
kurtosis can be written as
\[
\gamma_2(Z_\bc)=g\{R(\bc)\},
\qquad
g(t)=
\frac{\kappa_4t^2+6\kappa_3t+3\kappa_2}{(\kappa_2t+\kappa_1)^2}.
\]
This representation separates the geometric part of the problem, contained
in \(R(\bc)\), from the distributional part, contained in \(g\). The
following corollary identifies the population direction that maximizes
directional excess kurtosis when the induced scalar criterion is monotone.

\begin{corollary}[Maximum directional excess kurtosis]
\label{cor:max-kurtosis}
Assume that \(\bOmega\) is positive definite. Then
\[
\max_{\bc\neq\mathbf0}R(\bc)
=
\blambda^\top\bOmega^{-1}\blambda,
\qquad
\arg\max_{\bc\neq\mathbf0} R(\bc)
=
\{\bc:\bc\propto\bOmega^{-1}\blambda\}.
\]
If \(g\) is increasing on
\([0,\blambda^\top\bOmega^{-1}\blambda]\), then
\[
\sup_{\bc\neq\mathbf0}\gamma_2(Z_\bc)
=
g\!\left(\blambda^\top\bOmega^{-1}\blambda\right),
\]
and the maximizing directions satisfy
\[
\bc\propto\bOmega^{-1}\blambda .
\]
In particular, if \(\bOmega=\sigma^2\bI_d\), the maximizing direction
reduces to \(\bc\propto\blambda\).
\end{corollary}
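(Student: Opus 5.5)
The plan has two parts. First we maximize the generalized Rayleigh quotient \(R(\bc)\). Then we pass to \(\gamma_2(Z_\bc)=g\{R(\bc)\}\) using the monotonicity of \(g\).

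\textbf{The Rayleigh quotient.} Since \(\bOmega\) is positive definite, it has a symmetric positive definite square root \(\bOmega^{1/2}\). Put \(\bv=\bOmega^{1/2}\bc\), which is nonzero whenever \(\bc\neq\mathbf0\), and note that
\[
\bc^\top\blambda=\bv^\top\bOmega^{-1/2}\blambda,
\qquad
\bc^\top\bOmega\bc=\bv^\top\bv .
\]
Applying Cauchy--Schwarz to the vectors \(\bv\) and \(\bOmega^{-1/2}\blambda\) gives
\[
R(\bc)=\frac{(\bv^\top\bOmega^{-1/2}\blambda)^2}{\bv^\top\bv}\le \blambda^\top\bOmega^{-1}\blambda .
\]
Assume \(\blambda\neq\mathbf0\). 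Equality holds exactly when \(\bv\propto\bOmega^{-1/2}\blambda\), that is, when \(\bc\propto\bOmega^{-1}\blambda\). This establishes both the maximum value and the set of maximizers. If \(\blambda=\mathbf0\), then \(R\equiv0\) and every direction is a maximizer; I would record this degenerate case in a short remark.

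\textbf{Passing to kurtosis.} Since \(R(\bc)\ge0\), the range of \(R\) is contained in \([0,\blambda^\top\bOmega^{-1}\blambda]\). By the representation preceding the corollary, \(\gamma_2(Z_\bc)=g\{R(\bc)\}\). Because \(g\) is increasing on that interval,
\[
\gamma_2(Z_\bc)\le g(\blambda^\top\bOmega^{-1}\blambda),
\]
and the bound is attained at \(\bc\propto\bOmega^{-1}\blambda\). Hence the supremum is a maximum with the stated value, and these directions maximize.

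To claim that these are the only maximizers, we need \(g\) to be strictly increasing. In that case, \(g\{R(\bc)\}=g(\max R)\) forces \(R(\bc)=\max R\), and the first part then gives \(\bc\propto\bOmega^{-1}\blambda\). If \(g\) is only weakly increasing, I would state the conclusion as "the maximizing directions include those with \(\bc\propto\bOmega^{-1}\blambda\)."

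\textbf{The division step.} I would also check the division used in the representation. Since \(b_\bc=\bc^\top\bOmega\bc>0\) for \(\bc\neq\mathbf0\), it is legitimate to divide \(\eta_{4,\bc}\) and \(\eta_{2,\bc}^2\) from Theorem~\ref{thm:directional-kurtosis} by \(b_\bc^2\). The resulting denominator \((\kappa_2t+\kappa_1)^2\) is positive, because \(\kappa_1>0\) for a nondegenerate positive \(W\).

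\textbf{Isotropic case.} When \(\bOmega=\sigma^2\bI_d\), we have \(\bOmega^{-1}\blambda=\sigma^{-2}\blambda\), so the maximizing direction is \(\bc\propto\blambda\).

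\textbf{Main obstacle.} The only delicate point is the interpretation of "maximizing directions" in the edge cases: non-strict monotonicity of \(g\), and \(\blambda=\mathbf0\). I would handle both with the clarifications described above.
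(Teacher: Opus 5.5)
Your proof is correct and follows the paper's route. The paper gives no separate proof of this corollary: it only derives the reduction $\gamma_2(Z_\bc)=g\{R(\bc)\}$ before the statement and leaves the Rayleigh-quotient maximization and the monotonicity step implicit, which you supply via Cauchy--Schwarz after substituting $\bv=\bOmega^{1/2}\bc$. Your caveats are valid refinements of the statement as written: strict monotonicity of $g$ is needed to conclude that every maximizer satisfies $\bc\propto\bOmega^{-1}\blambda$, and the case $\blambda=\mathbf0$ is degenerate.
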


The corollary clarifies the population target of kurtosis-based projection
pursuit in this model. Under anisotropic covariance structure, the
maximizing direction is not \(\blambda\) itself but its covariance-adjusted
version \(\bOmega^{-1}\blambda\), whereas in the isotropic case the two
directions coincide. Thus, normal variance-mean mixtures generate a
structured kurtosis landscape determined jointly by the mean-shift
direction, the covariance geometry, and the cumulants of the mixing
variable. This contrasts with settings in which projection kurtosis is
constant across directions, such as in Poisson--skew-normal case \citep{JLM2025}.

The directional cumulant formulas above also extend naturally to
projection-pursuit indices that combine third- and fourth-order information.
This is useful because skewness and kurtosis describe different departures
from Gaussianity. For example, \citet{Virta2016} use a projection index
based on a convex combination of squared third and fourth cumulants to
extract non-Gaussian independent components and to separate non-Gaussian
signal directions from Gaussian noise directions.  In the consider mixture  model, this type of skewness--kurtosis index admits an explicit population form, which clarifies the directional target of the criterion under the model.

\begin{corollary}[Skewness--kurtosis projection index]
\label{cor:pp-index-mvnm}
For \(\alpha\in[0,1]\), define
\[
G_\alpha(\bc)
=
\alpha\gamma^2_\bc+(1-\alpha)\kappa^2_\bc,
\]
where
\[
\gamma_\bc
=
\frac{\eta_{3,\bc}}{\eta_{2,\bc}^{3/2}},
\qquad
\kappa_\bc
=
\frac{\eta_{4,\bc}}{\eta_{2,\bc}^{2}}
\]
denote the skewness and excess kurtosis of \(Z_\bc\), respectively. Then
\(G_\alpha(\bc)\) depends on \(\bc\) only through
\[
R(\bc)=\frac{a_\bc^2}{b_\bc}.
\]
If the induced function of \(R(\bc)\) is increasing on
\([0,\blambda^\top\bOmega^{-1}\blambda]\), then \(G_\alpha\) is maximized
at \(\bc\propto\bOmega^{-1}\blambda\). In particular, when
\(\bOmega=\sigma^2\bI_d\), the maximizing direction reduces to
\(\bc\propto\blambda\).
\end{corollary}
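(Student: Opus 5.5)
We plan to reduce everything to the scalar quantity $R(\bc)$, as was done for the excess kurtosis before Corollary~\ref{cor:max-kurtosis}. By Theorem~\ref{thm:directional-kurtosis}, each of $\eta_{2,\bc}$, $\eta_{3,\bc}$ and $\eta_{4,\bc}$ is an explicit polynomial in $a_\bc$ and $b_\bc$. Since $\bOmega$ is positive definite, $b_\bc>0$ for every $\bc\neq\mathbf0$, so we may divide by powers of $b_\bc$. Put $t=R(\bc)=a_\bc^2/b_\bc$. Then
\[
\gamma_\bc^2=\frac{\eta_{3,\bc}^2}{\eta_{2,\bc}^3}
=\frac{a_\bc^2(\kappa_3a_\bc^2+3\kappa_2b_\bc)^2}{(\kappa_2a_\bc^2+\kappa_1b_\bc)^3}
=\frac{t(\kappa_3t+3\kappa_2)^2}{(\kappa_2t+\kappa_1)^3},
\]
after dividing the numerator and the denominator by $b_\bc^3$. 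The key point is that $\eta_{3,\bc}^2$ contains the odd power $a_\bc$ only through the square $a_\bc^2$, so the sign of $a_\bc$ drops out; this is exactly why we square the skewness. In the same way, $\kappa_\bc=g(t)$, with $g$ the function displayed before Corollary~\ref{cor:max-kurtosis}. Therefore
\[
G_\alpha(\bc)=h_\alpha\{R(\bc)\},\qquad
h_\alpha(t)=\alpha\frac{t(\kappa_3t+3\kappa_2)^2}{(\kappa_2t+\kappa_1)^3}+(1-\alpha)g(t)^2,
\]
and this proves the first claim. We should note that the denominators are positive: $\kappa_1>0$ whenever $W$ is nondegenerate and positive, so $\kappa_2t+\kappa_1>0$ for $t\ge0$.

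For the maximization, we use the first part of Corollary~\ref{cor:max-kurtosis}. It gives that the range of $R$ over $\bc\neq\mathbf0$ lies in $[0,\blambda^\top\bOmega^{-1}\blambda]$, and that the upper endpoint is attained exactly on $\bc\propto\bOmega^{-1}\blambda$. That part is the generalized Cauchy--Schwarz inequality $(\bc^\top\blambda)^2\le(\bc^\top\bOmega\bc)(\blambda^\top\bOmega^{-1}\blambda)$, applied to the vectors $\bOmega^{1/2}\bc$ and $\bOmega^{-1/2}\blambda$. If $h_\alpha$ is increasing on this interval, then $G_\alpha(\bc)=h_\alpha\{R(\bc)\}\le h_\alpha(\blambda^\top\bOmega^{-1}\blambda)$, and equality holds at $\bc\propto\bOmega^{-1}\blambda$. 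So that direction is a maximizer. If $h_\alpha$ is strictly increasing, the maximizers are exactly these directions.

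In the isotropic case $\bOmega=\sigma^2\bI_d$ we have $\bOmega^{-1}\blambda=\sigma^{-2}\blambda\propto\blambda$, which gives the last claim. The only delicate step is the first one: checking that every term really depends on $a_\bc$ and $b_\bc$ only through $a_\bc^2/b_\bc$, which requires the squared skewness and the homogeneity of degree $3$ in $(a_\bc^2,b_\bc)$. The rest follows directly from the monotonicity hypothesis. We will also mention that $\blambda=\mathbf0$ is a trivial case, in which $R\equiv0$ and every direction is a maximizer.
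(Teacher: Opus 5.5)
Your proposal is correct and follows the same route the paper relies on. It writes $G_\alpha(\bc)=h_\alpha\{R(\bc)\}$ by dividing by powers of $b_\bc$, as in the paper's derivation of $\gamma_2(Z_\bc)=g\{R(\bc)\}$, and then applies the Cauchy--Schwarz (generalized Rayleigh quotient) bound of Corollary~\ref{cor:max-kurtosis}; the paper states this corollary without a written proof, and your formula $\gamma_\bc^2=t(\kappa_3t+3\kappa_2)^2/(\kappa_2t+\kappa_1)^3$, with the observation that squaring removes the sign of $a_\bc$, supplies exactly the step it leaves implicit.
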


The corollary shows that the covariance-adjusted direction
\(\bOmega^{-1}\blambda\) is not specific to the excess-kurtosis criterion.
It also arises for skewness--kurtosis projection indices whenever the
corresponding one-dimensional criterion is monotone in the Rayleigh
quotient. Thus \(G_\alpha\) provides a flexible projection-pursuit criterion
for normal variance-mean mixtures, particularly when skewness or kurtosis
alone gives only a partial description of the departure from Gaussianity.

These population results motivate the empirical procedures developed next.
Since \(\blambda\), \(\bOmega\), and the cumulants of \(W\) are typically
unknown in applications, directional non-Gaussianity must be studied through
sample cumulants and standardized projection indices. The next subsection
therefore turns to sample estimation and diagnostic summaries of
fourth-order structure.
\subsection{Spectral diagnostics, ICA, and non-Gaussian subspace recovery}
\label{subsec:spectral-ica-ngca}

The matrix representation \(\bK_{4,\bX}\) gives a spectral view of the
fourth-cumulant formula and connects the proposed diagnostics with
higher-order methods for detecting non-Gaussianity. In independent
component analysis, centering and whitening remove second-order
information, so departures from Gaussianity are identified through
higher-order cumulants \citep{Hyvarinen1999}. For the present mixture
model, the fourth-cumulant formula gives
\[
\bK_{4,\bX}
=
\kappa_4
\operatorname{vec}(\blambda\blambda^\top)
\operatorname{vec}(\blambda\blambda^\top)^\top
+
\kappa_3\bDelta(\blambda,\bOmega)
+
\kappa_2\bGamma(\bOmega),
\]
where \(\bDelta(\blambda,\bOmega)\) and \(\bGamma(\bOmega)\) denote the
matrix representations of \(\Delta(\blambda,\bOmega)\) and
\(\Gamma(\bOmega)\), respectively. The first term gives a rank-one
directional contribution along
\(\operatorname{vec}(\blambda\blambda^\top)\), whereas the remaining terms
represent mixed direction--covariance and covariance-pairing effects. This
rank-one direction therefore need not coincide with the leading eigenvector
of \(\bK_{4,\bX}\), unless the directional contribution is dominant or the
covariance structure imposes additional alignment.

This observation motivates spectral diagnostics based on
\(\bK_{4,\bX}\). If
\[
\left\|
\kappa_3\bDelta(\blambda,\bOmega)
+
\kappa_2\bGamma(\bOmega)
\right\|_{\mathrm{op}}
\]
is small relative to
\[
|\kappa_4|\,\|\operatorname{vec}(\blambda\blambda^\top)\|^2
=
|\kappa_4|\,\|\blambda\|^4,
\]
then standard perturbation arguments imply that the leading eigenspace of
\(\bK_{4,\bX}\) is close to the span of
\(\operatorname{vec}(\blambda\blambda^\top)\). In this regime, reshaping
the leading eigenvector into a \(d\times d\) matrix and extracting its
leading eigenvector gives an estimator of the dominant non-Gaussian
direction, up to sign and scale. This provides a population analogue of
tensor-based projection pursuit and cumulant-based ICA procedures.

Whitening connects this spectral viewpoint with practical fourth-order
methods. Let
\[
\bz=\bSigma_X^{-1/2}\bX_c,
\]
so that \(\bz\) has identity covariance, as in
Corollary~\ref{cor:standardized-fourth-cumulant}. The corresponding
standardized fourth cumulant matrix is
\[
\bK_{4,\bz}
=
(\bSigma_X^{-1/2}\otimes\bSigma_X^{-1/2})
\bK_{4,\bX}
(\bSigma_X^{-1/2}\otimes\bSigma_X^{-1/2})^\top .
\]
Whitening removes second-order scale and orientation effects, leaving the
transformed cumulant matrix to represent the remaining fourth-order
structure. This is closely related to the principle underlying
fourth-order ICA methods such as JADE, where whitening is followed by the
joint approximate diagonalization of fourth-order cumulant matrices to
recover non-Gaussian components \citep{Cardoso1993,Cardoso1999}. It is also
connected to projection-pursuit formulations based on skewness and kurtosis
contrasts \citep{Virta2016}.

Although spectral diagnostics and projection pursuit both use fourth-order
information, they use it in different ways. Projection pursuit searches for
a direction with a large scalar contrast, whereas the spectral approach
summarizes the eigenstructure of the fourth cumulant matrix. When the
fourth-order structure is mainly rank one, these two views are closely
related. The leading eigenspace of \(\bK_{4,\bX}\) or \(\bK_{4,\bz}\) is
then close to the direction induced by
\(\operatorname{vec}(\blambda\blambda^\top)\), while the
kurtosis-maximizing projection direction is \(\blambda\) in the isotropic
case and \(\bOmega^{-1}\blambda\) in the anisotropic case, under the
monotonicity condition in Corollary~\ref{cor:max-kurtosis}. When the mixed
direction--covariance and covariance-pairing terms are substantial, the
leading spectral direction should instead be viewed as a diagnostic of
dominant fourth-order structure, rather than as the projection-pursuit
maximizer itself.

This discussion also suggests a connection with non-Gaussian component
analysis (NGCA), where the goal is to identify a low-dimensional subspace
carrying departures from Gaussianity rather than to recover independent
coordinates \citep{Blanchard2005,Jin2019}. In the present mixture model,
such an interpretation is most appropriate when the rank-one directional
component dominates the mixed and covariance-pairing terms. Otherwise, the
leading spectral summaries of \(\bK_{4,\bX}\) or \(\bK_{4,\bz}\) should be
viewed as diagnostics of dominant fourth-order variation rather than as
subspace recovery guarantees. A full treatment of finite-sample subspace
recovery is beyond the scope of this paper.

\section{Numerical illustrations}
\label{sec:numerical-illustrations}

We illustrate the main theoretical results through Monte Carlo experiments,
focusing on the standardized fourth cumulant formula in
Theorem~\ref{thm:k4-mvnm}, the Mardia contraction in
Theorem~\ref{thm:mardia-mvnm}, and the projection-pursuit direction in
Corollary~\ref{cor:max-kurtosis}. We also assess whether the standardized
fourth cumulant matrix provides useful finite-sample diagnostics for
departure from Gaussianity and directional fourth-order structure. For a
sample \(\{\bX_i\}_{i=1}^n\), let \(\widehat\bmu\) and
\(\widehat\bSigma\) be the sample mean and covariance matrix, and define
\[
\widehat\bz_i=\widehat\bSigma^{-1/2}(\bX_i-\widehat\bmu).
\]
The sample standardized fourth cumulant matrix is
\[
\widehat{\bK}_{4,\bz}
=
\frac1n\sum_{i=1}^n
\operatorname{vec}(\widehat\bz_i\widehat\bz_i^\top)
\operatorname{vec}(\widehat\bz_i\widehat\bz_i^\top)^\top
-(\bI_{d^2}+\bK_{d,d})(\bI_d\otimes\bI_d)
-\operatorname{vec}(\bI_d)\operatorname{vec}(\bI_d)^\top,
\]
where \(\bK_{d,d}\) is the commutation matrix. The norm
\(\|\widehat{\bK}_{4,\bz}\|_F\) measures the overall fourth-order
departure from Gaussianity. To assess directional structure, we compute the best rank-one approximation
\[
(\widehat c,\widehat\bv)
=
\arg\min_{c\in\mathbb R,\ \|\bv\|=1}
\left\|
\widehat{\bK}_{4,\bz}
-
c\operatorname{vec}(\bv\bv^\top)
\operatorname{vec}(\bv\bv^\top)^\top
\right\|_F,
\]
and define
\[
q_1
=
\frac{
\left\|
\widehat{\bK}_{4,\bz}
-
\widehat c
\operatorname{vec}(\widehat\bv\widehat\bv^\top)
\operatorname{vec}(\widehat\bv\widehat\bv^\top)^\top
\right\|_F
}
{\|\widehat{\bK}_{4,\bz}\|_F}.
\]
For projection pursuit, define \(Z_{\bc,i}=\bc^\top\widehat\bz_i\) for
\(\|\bc\|=1\), and
\[
\widehat\kappa(\bc)
=
\frac{n^{-1}\sum_{i=1}^n Z_{\bc,i}^4}
{\{n^{-1}\sum_{i=1}^n Z_{\bc,i}^2\}^2}
-3.
\]
We maximize \(\widehat\kappa(\bc)\) numerically and compare the maximizer
with the population target, which is \(\blambda\) in the isotropic case and
\(\bOmega^{-1}\blambda\) in the anisotropic case.

\subsection{Simulation studies}

We use Monte Carlo simulations to assess how the proposed cumulant-based
quantities behave in finite samples and whether they reflect the population
structure derived above. The simulations focus on three aspects of the
theory: the standardized fourth cumulant formula, the scalar Mardia
contraction, and the recovery of dominant directions through projection and
spectral summaries. Data are generated from
\[
\bX=\bpsi+(W-\kappa_1)\blambda+\sqrt W\,\by,
\qquad
\by\sim\mathcal N_d(\mathbf0,\bOmega),
\qquad
W\perp\by,
\]
with \(d=10\), \(\bpsi=\mathbf0\), and
\[
\blambda
=
\frac{(1,0.9,\ldots,0.1)^\top}
{\|(1,0.9,\ldots,0.1)^\top\|}.
\]

The choice \(d=10\) gives a moderately high-dimensional cumulant matrix,
with \(d^2\times d^2=100\times100\), while keeping repeated fourth-order
estimation computationally feasible. We consider both the isotropic covariance
\(\bOmega=\bI_d\) and the anisotropic covariance
\(\bOmega=\operatorname{diag}(1,\ldots,d)\). The mixing variable is chosen
as \(W\sim\operatorname{Gamma}(2,1)\), \(W\sim\operatorname{IG}(1,1)\), or
\(W\equiv1\). The Gamma and inverse Gaussian laws are standard mixing
distributions behind variance-gamma/GAL-type and NIG-type
mean--variance mixtures, and they generate different fourth-order strengths,
with
\[
(\kappa_2,\kappa_3,\kappa_4)=(2,4,12)
\quad\text{and}\quad
(\kappa_2,\kappa_3,\kappa_4)=(1,3,15),
\]
respectively. 
Monte Carlo summaries are reported for \(n=1000,5000,10000\). For the Gaussian baseline \(W\equiv1\), the theoretical fourth cumulant is zero, so the relative Frobenius error is not defined. The empirical
standardized cumulant norm is therefore interpreted only as a reference level for sampling variation when assessing the non-Gaussian mixture cases.

\begin{table}[!ht]
\centering
\caption{Monte Carlo verification of the standardized fourth cumulant formula for \(d=10\). The relative Frobenius error is defined as
\(\operatorname{RelErr}=\|\widehat{\bK}_{4,\bz}-\bK_{4,\bz}\|_F/\|\bK_{4,\bz}\|_F\)
and is reported for the non-Gaussian mixtures.} \vspace{5mm}
\label{tab:k4-standardized-validation}
\begin{tabular}{cccccc}
\hline
Mixing law & Covariance & \(n\) &
\(\|\bK_{4,\bz}\|_F\) &
\(\|\widehat{\bK}_{4,\bz}\|_F\) &
RelErr \\
\hline
\hline
Gamma & Anisotropic & 1000  & 10.175 & 17.998 & 1.479 \\
Gamma & Anisotropic & 5000  & 10.175 & 12.249 & 0.677 \\
Gamma & Anisotropic & 10000 & 10.175 & 11.325 & 0.493 \\
Gamma & Isotropic   & 1000  & 10.560 & 19.181 & 1.546 \\
Gamma & Isotropic   & 5000  & 10.560 & 12.851 & 0.698 \\
Gamma & Isotropic   & 10000 & 10.560 & 11.769 & 0.493 \\
\hline
\hline
IG & Anisotropic & 1000  & 22.229 & 43.248 & 1.684 \\
IG & Anisotropic & 5000  & 22.229 & 27.345 & 0.746 \\
IG & Anisotropic & 10000 & 22.229 & 24.925 & 0.536 \\
IG & Isotropic   & 1000  & 24.372 & 47.102 & 1.679 \\
IG & Isotropic   & 5000  & 24.372 & 30.215 & 0.766 \\
IG & Isotropic   & 10000 & 24.372 & 27.457 & 0.555 \\
\hline
\end{tabular}
\end{table}

\begin{table}[!ht]
\centering
\caption{Monte Carlo validation of Mardia's excess kurtosis formula for \(d=10\). The table reports the theoretical value \(\gamma_{2,d}^{M}\), the Monte Carlo mean \(\widehat{\gamma}_{2,d}^{M}\), and the Monte Carlo bias.} \vspace{5mm}
\label{tab:mardia-validation}
\begin{tabular}{cccccc}
\hline
Mixing law & Covariance & \(n\) &
\(\gamma_{2,d}^{M}\) &
\(\widehat{\gamma}_{2,d}^{M}\) &
Bias \\
\hline
\hline
Gamma & Anisotropic & 1000  & 63.794 & 62.852 & -0.942 \\
Gamma & Anisotropic & 5000  & 63.794 & 63.816 &  0.022 \\
Gamma & Anisotropic & 10000 & 63.794 & 63.800 &  0.006 \\
Gamma & Isotropic   & 1000  & 65.625 & 64.574 & -1.051 \\
Gamma & Isotropic   & 5000  & 65.625 & 65.538 & -0.087 \\
Gamma & Isotropic   & 10000 & 65.625 & 65.837 &  0.212 \\
\hline
\hline
IG & Anisotropic & 1000  & 135.828 & 137.631 &  1.803 \\
IG & Anisotropic & 5000  & 135.828 & 134.087 & -1.740 \\
IG & Anisotropic & 10000 & 135.828 & 134.460 & -1.368 \\
IG & Isotropic   & 1000  & 144.000 & 146.402 &  2.402 \\
IG & Isotropic   & 5000  & 144.000 & 142.330 & -1.670 \\
IG & Isotropic   & 10000 & 144.000 & 142.400 & -1.600 \\
\hline
\end{tabular}
\end{table}

The first set of results, reported in
Table~\ref{tab:k4-standardized-validation}, examines estimation of the full
standardized fourth cumulant matrix. Its dimension, \(d^2\times d^2\),
makes this target more sensitive to finite-sample variation than scalar
summaries, as reflected in the larger relative errors at \(n=1000\). These
errors decline steadily with the sample size across both mixing laws and
covariance designs. The inverse Gaussian cases have larger population norms
than the Gamma cases, consistent with a stronger fourth-order contribution
from the mixing distribution. After standardization, the isotropic and
anisotropic designs show comparable relative-error patterns within each
mixing law, suggesting that whitening removes most second-order scale
effects without eliminating the fourth-order signal.

The scalar results in Table~\ref{tab:mardia-validation} give a more stable
finite-sample validation of the same fourth-order theory. Since Mardia's
excess kurtosis is a scalar contraction of the standardized fourth cumulant
matrix, it avoids the entrywise variability involved in estimating the full
\(d^2\times d^2\) object. The Gamma designs show very small biases for
\(n=5000\) and \(n=10000\). The inverse Gaussian designs have somewhat
larger absolute biases, in line with their larger population excess
kurtosis, but the Monte Carlo means remain close to the theoretical values
relative to the size of \(\gamma_{2,d}^{M}\). These findings support the
standardized fourth cumulant formula and its Mardia-kurtosis implication,
while also illustrating the greater finite-sample sensitivity of the full
matrix estimator.

\begin{table}[!ht]
\centering
\caption{Directional kurtosis maximization for \(d=10\). The table reports the Monte Carlo mean and standard deviation of the alignment between the empirical maximizer of directional excess kurtosis and the population target direction. The target direction is \(\blambda\) under isotropic covariance and \(\bOmega^{-1}\blambda\) under anisotropic covariance.} \vspace{5mm}
\label{tab:directional-kurtosis-alignment}
\begin{tabular}{ccccc}
\hline
Mixing law & Covariance & \(n\) & Mean alignment & SD \\
\hline
\hline
Gamma & Anisotropic & 1000  & 0.427 & 0.215 \\
Gamma & Anisotropic & 5000  & 0.526 & 0.257 \\
Gamma & Anisotropic & 10000 & 0.632 & 0.260 \\
Gamma & Isotropic   & 1000  & 0.396 & 0.171 \\
Gamma & Isotropic   & 5000  & 0.554 & 0.164 \\
Gamma & Isotropic   & 10000 & 0.640 & 0.148 \\
\hline
\hline
IG & Anisotropic & 1000  & 0.513 & 0.254 \\
IG & Anisotropic & 5000  & 0.673 & 0.218 \\
IG & Anisotropic & 10000 & 0.756 & 0.220 \\
IG & Isotropic   & 1000  & 0.450 & 0.189 \\
IG & Isotropic   & 5000  & 0.607 & 0.156 \\
IG & Isotropic   & 10000 & 0.727 & 0.164 \\
\hline
\end{tabular}
\end{table}

In Table~\ref{tab:directional-kurtosis-alignment}, we present the
directional recovery results for the projection-pursuit criterion based on
empirical excess kurtosis. The reported alignment is the absolute cosine
between the empirical maximizer and the population target direction, which
is \(\blambda\) in the isotropic case and \(\bOmega^{-1}\blambda\) in the
anisotropic case. Across all designs, the mean alignment increases with
sample size, in agreement with the population direction identified in
Corollary~\ref{cor:max-kurtosis}.

We observe that recovery is stronger under inverse Gaussian mixing than
under Gamma mixing, which is consistent with the larger fourth-order signal
observed in Tables~\ref{tab:k4-standardized-validation} and
\ref{tab:mardia-validation}. At \(n=10000\), the mean alignment reaches
approximately \(0.63\)--\(0.64\) for Gamma mixing and \(0.73\)--\(0.76\)
for inverse Gaussian mixing. These values are well above the reference
level expected from unrelated random directions in \(\mathbb R^{10}\),
indicating that the empirical criterion contains meaningful directional
information. The remaining gap from perfect alignment reflects the
finite-sample difficulty of optimizing a noisy fourth-order criterion over
the unit sphere.

\begin{table}[!ht]
\centering
\caption{Gaussianity diagnostics based on the standardized fourth cumulant matrix for \(d=10\). The norm test rejects for large values of \(\|\widehat{\bK}_{4,\bz}\|_F\), whereas the rank-one diagnostic rejects for small values of \(q_1\). Rejection rates are computed using Gaussian critical values at nominal level \(5\%\).} \vspace{5mm}
\label{tab:gaussianity-diagnostics}
\begin{tabular}{ccccccc}
\hline
Mixing law & Covariance & \(n\) &
\(\|\widehat{\bK}_{4,\bz}\|_F\) &
\(q_1\) &
Norm test &
\(q_1\) test \\
\hline
\hline
Degenerate & Both & 1000  & 4.12 & 0.972 & 0.050 & 0.050 \\
Degenerate & Both & 5000  & 1.85 & 0.981 & 0.050 & 0.050 \\
Degenerate & Both & 10000 & 1.31 & 0.983 & 0.050 & 0.050 \\
\hline
\hline
Gamma & Anisotropic & 1000  & 15.589 & 0.886 & 1.000 & 0.977 \\
Gamma & Anisotropic & 5000  & 11.741 & 0.939 & 1.000 & 0.983 \\
Gamma & Anisotropic & 10000 & 10.961 & 0.955 & 1.000 & 1.000 \\
Gamma & Isotropic   & 1000  & 16.013 & 0.884 & 1.000 & 0.993 \\
Gamma & Isotropic   & 5000  & 12.155 & 0.935 & 1.000 & 1.000 \\
Gamma & Isotropic   & 10000 & 11.351 & 0.951 & 1.000 & 1.000 \\
\hline
\hline
IG & Anisotropic & 1000  & 34.069 & 0.814 & 1.000 & 0.997 \\
IG & Anisotropic & 5000  & 26.049 & 0.894 & 1.000 & 1.000 \\
IG & Anisotropic & 10000 & 24.580 & 0.917 & 1.000 & 1.000 \\
IG & Isotropic   & 1000  & 36.805 & 0.797 & 1.000 & 1.000 \\
IG & Isotropic   & 5000  & 29.057 & 0.873 & 1.000 & 1.000 \\
IG & Isotropic   & 10000 & 26.772 & 0.905 & 1.000 & 1.000 \\
\hline
\end{tabular}
\end{table}

Finally, Table~\ref{tab:gaussianity-diagnostics} reports the performance
of the standardized fourth-cumulant diagnostics under the Gaussian
baseline and the two non-Gaussian mixtures. The degenerate case
\(W\equiv1\) confirms the calibration of the Gaussian critical values, with
both tests rejecting at the nominal \(5\%\) level. Under Gamma and inverse
Gaussian mixing, the Frobenius-norm test rejects in essentially all
replications, showing that \(\|\widehat{\bK}_{4,\bz}\|_F\) is highly
sensitive to the fourth-order departures induced by mean--variance mixing.

The \(q_1\)-based diagnostic gives complementary information about
directional concentration. The Frobenius norm measures the overall
magnitude of the standardized fourth cumulant, whereas \(q_1\) assesses
how closely the fourth-order signal is approximated by a dominant rank-one
component. Because the MVNM fourth cumulant contains directional, mixed
direction--covariance and covariance-pairing terms, a purely rank-one
structure is not expected. Smaller \(q_1\) values relative to the Gaussian
baseline therefore indicate stronger directional concentration within the
full fourth-order structure. This effect is most pronounced under inverse
Gaussian mixing, where the cumulant norms are larger and the \(q_1\) values
are smaller than in the Gamma designs.

\subsection{Empirical studies}

We illustrate the proposed diagnostics using daily log returns of ten large
and liquid U.S. stocks, namely AAPL, MSFT, AMZN, GOOG, JPM, XOM, JNJ, PG,
NVDA, and IBM. The sample runs from 28 September 2010 to 28 September 2020
and contains \(n=2516\) daily return vectors after retaining common trading
days with complete observations for all ten stocks. The return vectors are
centered and whitened before computing the standardized fourth cumulant
matrix \(\widehat{\bK}_{4,\bz}\), the rank-one residual \(q_1\), Mardia's
excess kurtosis, and the leading fourth-order direction \(\widehat\bv\). A
Gaussian reference distribution is generated with the same sample size and
dimension. The empirical analysis is intended as a diagnostic illustration,
rather than as a claim that the return vectors are exactly generated by an
MVNM model. We use the fourth-cumulant formulation to examine whether the
data exhibit global and directional fourth-order structure.

\begin{table}[!ht]
\centering
\caption{Empirical fourth-cumulant diagnostics for standardized stock returns compared with a Gaussian reference distribution. The Gaussian reference is computed under the same sample size and dimension.} \vspace{5mm}
\label{tab:empirical-gaussian-reference}
\begin{tabular}{ccccc}
\hline
Statistic & Gaussian mean & Gaussian 5\% & Gaussian 95\% & Observed \\
\hline
\(\|\widehat{\bK}_{4,\bz}\|_F\) & 2.608 & 2.471 & 2.741 & 60.148 \\
\(q_1\) & 0.978 & 0.964 & 0.987 & 0.851 \\
\(1-q_1^2\) & 0.044 & 0.026 & 0.070 & 0.275 \\
\(\widehat\gamma_{2,d}^{M}\) & -0.091 & -1.193 & 0.935 & 215.977 \\
\hline
\end{tabular}
\end{table}

The first empirical comparison, reported in
Table~\ref{tab:empirical-gaussian-reference}, indicates a pronounced
departure from Gaussian fourth-order behaviour. The observed standardized
cumulant norm is \(60.15\), far above the Gaussian 95\% reference value of
\(2.74\), while Mardia's excess kurtosis is \(215.98\), compared with a
Gaussian reference interval of approximately \([-1.19,0.94]\). These
differences are too large to be interpreted as ordinary finite-sample
fluctuations around a Gaussian benchmark.

The same table also indicates that part of the fourth-order signal is
directionally concentrated. The observed rank-one residual is
\(q_1=0.851\), below the Gaussian 5\% reference value of \(0.964\), while
the explained squared Frobenius fraction is \(1-q_1^2=0.275\), compared
with a Gaussian reference mean of \(0.044\). Thus, the empirical fourth
cumulant is not close to a purely rank-one object, but a substantial
portion of its variation is captured by a leading direction. This is
consistent with the MVNM fourth-cumulant structure, in which the rank-one
term \(\kappa_4\blambda^{\otimes4}\) appears together with the mixed
direction--covariance and covariance-pairing terms.

Having found evidence of a directional fourth-order component, we next
examine which coordinates contribute most strongly to the leading empirical
direction.

\begin{table}[!ht]
\centering
\caption{Largest absolute loadings of the leading empirical fourth-order direction. The sign of the loading is arbitrary, since \(\widehat\bv\) and \(-\widehat\bv\) represent the same direction.} \vspace{5mm}
\label{tab:empirical-direction-loadings} 
\begin{tabular}{ccc}
\hline
Stock & Loading & Absolute loading \\
\hline
GOOG &  0.965 & 0.965 \\
MSFT & -0.178 & 0.178 \\
JNJ  & -0.124 & 0.124 \\
JPM  & -0.091 & 0.091 \\
AMZN & -0.065 & 0.065 \\
\hline
\end{tabular}
\end{table}

Table~\ref{tab:empirical-direction-loadings} shows that the leading empirical fourth-order direction is highly concentrated. The
largest absolute loading is associated with GOOG, whereas the remaining
reported coordinates have much smaller contributions. Since the sign of
\(\widehat\bv\) is arbitrary, the interpretation is based on absolute
loadings. This pattern suggests that the leading fourth-order component is
not a broad market-wide direction, but is largely associated with a small
subset of return coordinates. From a risk-diagnostic perspective, this is
important because covariance-based summaries may suggest diversification at
the second-moment level while leaving higher-order tail concentration
unidentified. The loadings should therefore be interpreted as indicators
of fourth-order tail concentration, not as portfolio weights.

The concentration of \(\widehat\bv\) motivates a closer examination of the
observations underlying the leading fourth-order direction. We therefore
project the standardized returns onto this direction,
\[
s_t=\widehat\bv^\top\widehat\bz_t,
\]
and rank observations by the projected fourth-order score \(s_t^4\). Large
values of \(s_t^4\) correspond to observations that make a disproportionate
contribution to the leading cumulant direction.

\begin{table}[!ht]
\centering
\caption{Largest projected fourth-order tail scores along the leading cumulant direction.} \vspace{5mm}
\label{tab:empirical-tail-events}
\begin{tabular}{ccccc}
\hline
Date & \(s_t\) & \(s_t^4\) & GOOG return & GOOG \(z\)-score \\
\hline
2015-07-17 & 13.081 & 29279.422 & 14.887 & 12.656 \\
2013-10-18 & 10.503 & 12166.895 & 12.924 & 10.255 \\
2011-07-15 & 10.249 & 11032.647 & 12.208 & 10.038 \\
2012-01-20 & -9.020 & 6620.597 & -8.749 & -8.086 \\
2019-07-26 & 8.725 & 5795.037 & 9.938 & 8.538 \\
\hline
\end{tabular}
\end{table}

Table~\ref{tab:empirical-tail-events} provides a date-level interpretation
of the leading direction. The largest projected fourth-order scores are
associated with large standardized GOOG returns. The most influential
observation occurs on 17 July 2015, when the GOOG log return is about
\(14.89\%\) and its standardized return is \(12.66\). The GOOG-dominated
loading in Table~\ref{tab:empirical-direction-loadings} is therefore not
only a numerical feature of the rank-one approximation; it can be traced to
specific tail observations. This illustrates how the leading cumulant
direction can be used diagnostically to identify both the dates and the
coordinates that dominate the empirical fourth-order signal.

This interpretation is particularly relevant for financial data, where
firm-specific jumps, abrupt repricing events, and market stress episodes
may have limited influence on covariance summaries but substantial influence
on fourth-order structure. The projected score \(s_t^4\) provides a simple
ranking of observations according to their contribution to directional tail
risk. We use it here as an empirical diagnostic, while a systematic
framework for cumulant-based detection of localized market stress events is
left for future work.

\begin{table}[!ht]
\centering
\caption{Robustness of empirical fourth-cumulant diagnostics after removing the largest projected tail scores.} \vspace{5mm}
\label{tab:empirical-trimmed-diagnostics}
\begin{tabular}{ccccc}
\hline
Sample & \(n\) & \(\|\widehat{\bK}_{4,\bz}\|_F\) & \(q_1\) & \(\widehat\gamma_{2,d}^{M}\) \\
\hline
Full sample & 2516 & 60.148 & 0.851 & 215.977 \\
Trimmed top 1\% & 2491 & 51.275 & 0.904 & 184.988 \\
\hline
\end{tabular}
\end{table}

The final empirical check, reported in
Table~\ref{tab:empirical-trimmed-diagnostics}, examines the sensitivity of
the diagnostics to the largest projected tail scores. Removing the top
\(1\%\) of observations ranked by \(s_t^4\) reduces the cumulant norm from
\(60.15\) to \(51.28\), and Mardia's excess kurtosis from \(215.98\) to
\(184.99\). The rank-one residual increases from \(0.851\) to \(0.904\),
which indicates that the strongest directional component is partly shaped
by the largest GOOG-driven tail events. However, the standardized cumulant
norm and Mardia's excess kurtosis remain far above Gaussian reference
levels after trimming. The empirical non-Gaussianity is therefore not
explained solely by a few extreme observations; rather, the data contain
both localized directional tail events and broader fourth-order structure.

The empirical illustration demonstrates how the proposed diagnostics can be
used to separate global and directional features of fourth-order
non-Gaussianity. The standardized fourth cumulant matrix reveals a strong
overall departure from Gaussianity, the rank-one approximation extracts a
dominant directional component, and the projected fourth-order score links
this component to specific tail observations. These results show that
leading cumulant directions can provide information that is not captured by
covariance-based summaries, especially when tail behaviour is concentrated
in particular coordinates or time points.

\section{Conclusion}
\label{sec:conclusion}

This paper studied kurtosis in normal variance-mean mixtures through the
structure of the fourth cumulant. The explicit fourth-cumulant formula
separates naturally into a directional rank-one component, a mixed
direction--covariance component, and a covariance-pairing component induced
by the mixing variable. This formulation shows that the kurtosis structure
of mean--variance normal mixtures is generally richer than a purely
rank-one tail direction, even when a dominant direction is present. It also
connects tensor-level fourth-order structure with classical scalar
measures, since Mardia's excess kurtosis arises as a contraction of the
same cumulant object.

The simulation study is consistent with the theoretical analysis. The
standardized fourth cumulant matrix is more sensitive to finite-sample
variation than scalar summaries, but its sample estimate moves closer to
the population quantity as the sample size increases. Mardia's excess
kurtosis is estimated more stably, reflecting its role as a scalar
contraction of the standardized cumulant. The projection-pursuit
experiments exhibit increasing alignment between the empirical maximizer of
directional excess kurtosis and the population target, especially when the
mixing distribution induces a stronger fourth-order signal. The
Gaussianity diagnostics further indicate that the cumulant norm is
sensitive to departures from Gaussianity, while the rank-one residual
\(q_1\) provides complementary information about directional concentration.

The empirical illustration using daily stock returns demonstrates how the
proposed diagnostics can be applied to real multivariate data. The
standardized cumulant norm and Mardia's excess kurtosis reveal strong
departures from Gaussianity, whereas the leading fourth-order direction and
the associated projected scores identify a meaningful directional tail
component. The trimming analysis further indicates that the observed
fourth-order structure is not merely an artifact of a few extreme
observations, but reflects broader non-Gaussian features of the data.

The fourth-cumulant formulation therefore provides an interpretable bridge
between global kurtosis measures and directional tail diagnostics. It
offers a useful complement to covariance-based analysis when non-Gaussian
dependence and tail concentration are important. The same perspective also
suggests a route toward formal procedures for multivariate outlier
detection, stress-event monitoring, and flash-crash-type analysis.

\bibliographystyle{apalike}
\bibliography{JLMP}
\section{Appendix}
\noindent\textit{Proof of Theorem~\ref{thm:k4-mvnm}.}
Let \(U=W-\kappa_1\). Then
\[
\bX_c=U\blambda+\sqrt W\,\by .
\]
We first compute the fourth central moment of \(\bX\) and then subtract
the covariance pairings. The calculation uses only the independence of
\(W\) and \(\by\), the fact that all odd Gaussian moments vanish, and
Isserlis' formula for the fourth moments of \(\by\).
Set
\[
U=W-\kappa_1,
\]
so that \(\E(U)=0\). Since \(\by\) is independent of \(W\) and
\(\E(\by)=\mathbf0\), we have
\[
\E(\bX)=\bpsi,\qquad
\bX_c:=\bX-\E(\bX)=U\blambda+\sqrt W\,\by .
\]
Expanding \(\bX_c^{\otimes4}\) by multilinearity gives sixteen terms,
which we list explicitly using the ordered tensor convention for indices
\((i,j,k,l)\):
\[
\begin{aligned}
\bX_c^{\otimes4}
={}&
(\blambda U)\otimes(\blambda U)^\top\otimes(\blambda U)\otimes(\blambda U)^\top \\
&+
(\blambda U)\otimes(\blambda U)^\top\otimes(\blambda U)\otimes(\sqrt W\,\by)^\top \\
&+
(\blambda U)\otimes(\blambda U)^\top\otimes(\sqrt W\,\by)\otimes(\blambda U)^\top \\
&+
(\blambda U)\otimes(\blambda U)^\top\otimes(\sqrt W\,\by)\otimes(\sqrt W\,\by)^\top \\
&+
(\blambda U)\otimes(\sqrt W\,\by)^\top\otimes(\blambda U)\otimes(\blambda U)^\top \\
&+
(\blambda U)\otimes(\sqrt W\,\by)^\top\otimes(\blambda U)\otimes(\sqrt W\,\by)^\top \\
&+
(\blambda U)\otimes(\sqrt W\,\by)^\top\otimes(\sqrt W\,\by)\otimes(\blambda U)^\top \\
&+
(\blambda U)\otimes(\sqrt W\,\by)^\top\otimes(\sqrt W\,\by)\otimes(\sqrt W\,\by)^\top \\
&+
(\sqrt W\,\by)\otimes(\blambda U)^\top\otimes(\blambda U)\otimes(\blambda U)^\top \\
&+
(\sqrt W\,\by)\otimes(\blambda U)^\top\otimes(\blambda U)\otimes(\sqrt W\,\by)^\top \\
&+
(\sqrt W\,\by)\otimes(\blambda U)^\top\otimes(\sqrt W\,\by)\otimes(\blambda U)^\top \\
&+
(\sqrt W\,\by)\otimes(\blambda U)^\top\otimes(\sqrt W\,\by)\otimes(\sqrt W\,\by)^\top \\
&+
(\sqrt W\,\by)\otimes(\sqrt W\,\by)^\top\otimes(\blambda U)\otimes(\blambda U)^\top \\
&+
(\sqrt W\,\by)\otimes(\sqrt W\,\by)^\top\otimes(\blambda U)\otimes(\sqrt W\,\by)^\top \\
&+
(\sqrt W\,\by)\otimes(\sqrt W\,\by)^\top\otimes(\sqrt W\,\by)\otimes(\blambda U)^\top \\
&+
(\sqrt W\,\by)\otimes(\sqrt W\,\by)^\top\otimes(\sqrt W\,\by)\otimes(\sqrt W\,\by)^\top .
\end{aligned}
\]
The same expansion, with scalar powers separated from tensor factors, is
\[
\begin{aligned}
\bX_c^{\otimes4}
={}&
U^4\,\blambda\otimes\blambda^\top\otimes\blambda\otimes\blambda^\top \\
&+
U^3\sqrt W\,\blambda\otimes\blambda^\top\otimes\blambda\otimes\by^\top
+
U^3\sqrt W\,\blambda\otimes\blambda^\top\otimes\by\otimes\blambda^\top \\
&+
U^2W\,\blambda\otimes\blambda^\top\otimes\by\otimes\by^\top
+
U^3\sqrt W\,\blambda\otimes\by^\top\otimes\blambda\otimes\blambda^\top \\
&+
U^2W\,\blambda\otimes\by^\top\otimes\blambda\otimes\by^\top
+
U^2W\,\blambda\otimes\by^\top\otimes\by\otimes\blambda^\top \\
&+
UW^{3/2}\,\blambda\otimes\by^\top\otimes\by\otimes\by^\top
+
U^3\sqrt W\,\by\otimes\blambda^\top\otimes\blambda\otimes\blambda^\top \\
&+
U^2W\,\by\otimes\blambda^\top\otimes\blambda\otimes\by^\top
+
U^2W\,\by\otimes\blambda^\top\otimes\by\otimes\blambda^\top \\
&+
UW^{3/2}\,\by\otimes\blambda^\top\otimes\by\otimes\by^\top
+
U^2W\,\by\otimes\by^\top\otimes\blambda\otimes\blambda^\top \\
&+
UW^{3/2}\,\by\otimes\by^\top\otimes\blambda\otimes\by^\top
+
UW^{3/2}\,\by\otimes\by^\top\otimes\by\otimes\blambda^\top \\
&+
W^2\,\by\otimes\by^\top\otimes\by\otimes\by^\top .
\end{aligned}
\]

Taking expectations, all terms containing an odd number of Gaussian factors
vanish, since \(\by\) is centered Gaussian. The terms with zero, two, and
four Gaussian factors remain. Using independence of \(W\) and \(\by\), and
\(\E(y_a y_b)=\Omega_{ab}\), we obtain
\[
\begin{aligned}
\E[\bX_c^{\otimes4}]_{ijkl}
={}&
\E(U^4)\lambda_i\lambda_j\lambda_k\lambda_l \\
&+
\E(U^2W)
\Big(
\lambda_i\lambda_j\Omega_{kl}
+\lambda_i\lambda_k\Omega_{jl}
+\lambda_i\lambda_l\Omega_{jk} \\
&\hspace{2.7cm}
+\lambda_k\lambda_l\Omega_{ij}
+\lambda_j\lambda_l\Omega_{ik}
+\lambda_j\lambda_k\Omega_{il}
\Big) \\
&+
\E(W^2)\E(y_i y_j y_k y_l).
\end{aligned}
\]
By Isserlis' theorem,
\[
\E(y_i y_j y_k y_l)
=
\Omega_{ij}\Omega_{kl}
+
\Omega_{ik}\Omega_{jl}
+
\Omega_{il}\Omega_{jk}.
\]
Therefore,
\[
\E[\bX_c^{\otimes4}]
=
\E(U^4)\blambda^{\otimes4}
+
\E(U^2W)\Delta(\blambda,\bOmega)
+
\E(W^2)\Gamma(\bOmega).
\]

It remains to express the scalar coefficients in terms of cumulants of
\(W\). Let \(m_r=\E(W^r)\). Since \(U=W-\kappa_1\),
\[
\E(U^4)
=
m_4-4\kappa_1m_3+6\kappa_1^2m_2-3\kappa_1^4
=
\kappa_4+3\kappa_2^2,
\]
and
\[
\E(W^2)=m_2=\kappa_2+\kappa_1^2.
\]
Moreover,
\[
\begin{aligned}
\E(U^2W)
&=\E\{(W-\kappa_1)^2W\} \\
&=m_3-2\kappa_1m_2+\kappa_1^3
=\kappa_3+\kappa_1\kappa_2,
\end{aligned}
\]
where
\[
m_2=\kappa_2+\kappa_1^2,\qquad
m_3=\kappa_3+3\kappa_1\kappa_2+\kappa_1^3.
\]
Thus
\[
\begin{aligned}
\E[\bX_c^{\otimes4}]
={}&
(\kappa_4+3\kappa_2^2)\blambda^{\otimes4}
+
(\kappa_3+\kappa_1\kappa_2)\Delta(\blambda,\bOmega) \\
&+
(\kappa_2+\kappa_1^2)\Gamma(\bOmega).
\end{aligned}
\]

The covariance matrix of \(\bX\) is
\[
\bSigma_X
=
\operatorname{Var}(\bX)
=
\operatorname{Var}(U\blambda+\sqrt W\,\by)
=
\kappa_2\blambda\blambda^\top+\kappa_1\bOmega,
\]
because the cross term vanishes by independence and centering. Therefore,
\[
\Sigma_{ab}=\kappa_2\lambda_a\lambda_b+\kappa_1\Omega_{ab}.
\]
By definition,
\[
\mathcal K_4(\bX)
=
\E[\bX_c^{\otimes4}]
-
\Big(
\bSigma_X\otimes_{ij,kl}\bSigma_X
+
\bSigma_X\otimes_{ik,jl}\bSigma_X
+
\bSigma_X\otimes_{il,jk}\bSigma_X
\Big).
\]
By bilinearity of the tensor pairings, and since each rank-one pairing of
\(\blambda\blambda^\top\) with itself equals \(\blambda^{\otimes4}\),
\[
\begin{aligned}
&\bSigma_X\otimes_{ij,kl}\bSigma_X
+
\bSigma_X\otimes_{ik,jl}\bSigma_X
+
\bSigma_X\otimes_{il,jk}\bSigma_X \\
&\quad =
3\kappa_2^2\blambda^{\otimes4}
+
\kappa_1\kappa_2\Delta(\blambda,\bOmega)
+
\kappa_1^2\Gamma(\bOmega).
\end{aligned}
\]
Substituting this expression into the definition of
\(\mathcal K_4(\bX)\) gives
\[
\begin{aligned}
\mathcal K_4(\bX)
={}&
(\kappa_4+3\kappa_2^2)\blambda^{\otimes4}
+
(\kappa_3+\kappa_1\kappa_2)\Delta(\blambda,\bOmega)
+
(\kappa_2+\kappa_1^2)\Gamma(\bOmega) \\
&-
3\kappa_2^2\blambda^{\otimes4}
-
\kappa_1\kappa_2\Delta(\blambda,\bOmega)
-
\kappa_1^2\Gamma(\bOmega).
\end{aligned}
\]
After cancellation,
\[
\mathcal K_4(\bX)
=
\kappa_4\blambda^{\otimes4}
+
\kappa_3\Delta(\blambda,\bOmega)
+
\kappa_2\Gamma(\bOmega),
\]
which proves the result.
\\ \\
\noindent\textit{Proof of Theorem~\ref{thm:mardia-mvnm}.}
Let \(\bz=\bSigma_X^{-1/2}(\bX-\bmu_X)\). Mardia's excess kurtosis is the
trace of the fourth cumulant matrix of \(\bz\). We therefore apply the
linear transformation rule for fourth cumulants to the decomposition in
Theorem~\ref{thm:k4-mvnm} and then take the trace. The result follows by
simplifying the three trace contributions corresponding to the rank-one,
mixed, and covariance-pairing terms.

Let \(\bX_c=\bX-\bmu_X\) and \(\bA=\bSigma_X^{-1}\). Since
\[
\bX_c^\top\bA\bX_c
=
\operatorname{vec}(\bA)^\top\operatorname{vec}(\bX_c\bX_c^\top),
\]
Mardia's kurtosis can be written as
\[
\beta_{2,d}^M(\bX)
=
\operatorname{vec}(\bA)^\top
\E\!\left[
\operatorname{vec}(\bX_c\bX_c^\top)
\operatorname{vec}(\bX_c\bX_c^\top)^\top
\right]
\operatorname{vec}(\bA).
\]
By the definition of the fourth cumulant matrix,
\[
\begin{aligned}
\E\!\left[
\operatorname{vec}(\bX_c\bX_c^\top)
\operatorname{vec}(\bX_c\bX_c^\top)^\top
\right]
={}&
\bK_{4,\bX}
+
(\bI_{d^2}+\bK_{d,d})(\bSigma_X\otimes\bSigma_X)  \\
&+
\operatorname{vec}(\bSigma_X)\operatorname{vec}(\bSigma_X)^\top ,
\end{aligned}
\]
where \(\bK_{d,d}\) is the commutation matrix. Since
\(\bA=\bSigma_X^{-1}\),
\[
\operatorname{vec}(\bA)^\top\operatorname{vec}(\bSigma_X)=d,
\qquad
\operatorname{vec}(\bA)^\top
(\bSigma_X\otimes\bSigma_X)
\operatorname{vec}(\bA)=d.
\]
Moreover, symmetry of \(\bA\) and \(\bSigma_X\) gives
\[
\operatorname{vec}(\bA)^\top
\bK_{d,d}(\bSigma_X\otimes\bSigma_X)
\operatorname{vec}(\bA)=d.
\]
Hence
\[
\beta_{2,d}^M(\bX)
=
d(d+2)
+
\operatorname{vec}(\bA)^\top
\bK_{4,\bX}
\operatorname{vec}(\bA),
\]
and therefore
\[
\gamma_{2,d}^M(\bX)
=
\operatorname{vec}(\bA)^\top
\bK_{4,\bX}
\operatorname{vec}(\bA).
\]

By Theorem~\ref{thm:k4-mvnm},
\[
\mathcal K_4(\bX)
=
\kappa_4\blambda^{\otimes 4}
+
\kappa_3\Delta(\blambda,\bOmega)
+
\kappa_2\Gamma(\bOmega).
\]
In the matrix representation associated with
\(\operatorname{vec}(\bX_c\bX_c^\top)\),
\[
\bK_{4,\bX}
=
\kappa_4
\operatorname{vec}(\blambda\blambda^\top)
\operatorname{vec}(\blambda\blambda^\top)^\top
+
\kappa_3\bDelta(\blambda,\bOmega)
+
\kappa_2\bGamma(\bOmega),
\]
where \(\bDelta(\blambda,\bOmega)\) and \(\bGamma(\bOmega)\) are the
corresponding \(d^2\times d^2\) matrix representations of
\(\Delta(\blambda,\bOmega)\) and \(\Gamma(\bOmega)\). Thus
\[
\begin{aligned}
\gamma_{2,d}^M(\bX)
={}&
\kappa_4
\left\{
\operatorname{vec}(\bA)^\top
\operatorname{vec}(\blambda\blambda^\top)
\right\}^2  \\
&+
\kappa_3
\operatorname{vec}(\bA)^\top
\bDelta(\blambda,\bOmega)
\operatorname{vec}(\bA)
+
\kappa_2
\operatorname{vec}(\bA)^\top
\bGamma(\bOmega)
\operatorname{vec}(\bA).
\end{aligned}
\]
The first contraction is
\[
\operatorname{vec}(\bA)^\top
\operatorname{vec}(\blambda\blambda^\top)
=
\blambda^\top\bA\blambda.
\]
Using the componentwise definitions of \(\Delta\) and \(\Gamma\), together
with the symmetry of \(\bA\) and \(\bOmega\), gives
\[
\operatorname{vec}(\bA)^\top
\bGamma(\bOmega)
\operatorname{vec}(\bA)
=
\{\operatorname{tr}(\bA\bOmega)\}^2
+
2\operatorname{tr}(\bA\bOmega\bA\bOmega),
\]
and
\[
\operatorname{vec}(\bA)^\top
\bDelta(\blambda,\bOmega)
\operatorname{vec}(\bA)
=
2(\blambda^\top\bA\blambda)\operatorname{tr}(\bA\bOmega)
+
4\blambda^\top\bA\bOmega\bA\blambda.
\]
Substituting \(\bA=\bSigma_X^{-1}\) yields the stated expression for
\(\gamma_{2,d}^M(\bX)\). The identity for
\(\beta_{2,d}^M(\bX)\) follows by definition.
\end{document}